\documentclass[sigconf,nonacm]{acmart}
\AtBeginDocument{%
  \providecommand\BibTeX{{%
    Bib\TeX}}}

\setcopyright{acmlicensed}
\copyrightyear{2027}
\acmYear{2027}
\acmDOI{XXXXXXX.XXXXXXX}
\acmConference[KDD '27]{the 33rd ACM SIGKDD Conference on Knowledge Discovery and Data Mining}{August 01--05,
  2027}{San Jose, California}
\acmISBN{978-1-4503-XXXX-X/2018/06}

\acmSubmissionID{1281}

\usepackage{tcolorbox}
\usepackage{xspace}

\usepackage{amsthm}
\usepackage{amsmath} 

\theoremstyle{definition}
\theoremstyle{plain}

\newtheorem{problem}{Problem}
\newtheorem{lemma}{Lemma}

\usepackage{algorithm}
\usepackage{algpseudocode}

\usepackage{booktabs}
\usepackage{multirow}

\usepackage{graphicx}

\usepackage{enumitem}

\usepackage{url}

\usepackage{hyperref}

\usepackage{xcolor}

\usepackage{enumitem}

\newcommand{\ProblemShort}{ROMSA\xspace}
\newcommand{\ProblemFull}{\textit{Robust Opinion Maximization via Stubbornness Adjustment}\xspace}
\newcommand{\ProblemBudget}{b-{\ProblemShort}\xspace}

\newcommand{\CustomLemmaThree}{\texttt{FastPoly}\xspace}
\newcommand{\CustomLemmaFour}{\texttt{FastEllip}\xspace}

\newcommand{\stubbo}{\ensuremath{\mathbf{a}}}
\newcommand{\stubbotarg}{\ensuremath{\mathbf{a}^\ast}}
\newcommand{\stubboInit}{\ensuremath{\mathbf{a}_{\text{init}}}}
\newcommand{\inner}{\ensuremath{\mathbf{s}}}
\newcommand{\innernoisy}{\ensuremath{\mathbf{\hat{s}}}}
\newcommand{\expressed}{\ensuremath{\mathbf{z}}}
\newcommand{\influence}{\ensuremath{\mathbf{q}}}
\newcommand{\influencetarg}{\ensuremath{\mathbf{q}^\ast}}
\newcommand{\influenceInit}{\ensuremath{\influence_{\text{init}}}}
\newcommand{\noise}{\ensuremath{\boldsymbol{\delta}}}
\newcommand{\polybounds}{\ensuremath{\boldsymbol{\rho}}}
\newcommand{\public}{\ensuremath{\zeta}}
\newcommand{\publicnoisy}{\ensuremath{\hat{\zeta}}}
\newcommand{\station}{\ensuremath{\boldsymbol{\pi}}}

\newcommand{\bigO}{\ensuremath{{\mathcal{O}}}}

\newcommand{\spara}[1]{\smallbreak\noindent{\bf{#1}}}

\newcommand{\ie}{{i.e.}}

\newcommand{\squishlist}{\begin{list}{$\bullet$}
  { \setlength{\itemsep}{0pt}
     \setlength{\parsep}{3pt}
     \setlength{\topsep}{3pt}
     \setlength{\partopsep}{0pt}
     \setlength{\leftmargin}{1.5em}
     \setlength{\labelwidth}{1em}
     \setlength{\labelsep}{0.5em} } }
\newcommand{\squishend}{
  \end{list}  }

\usepackage{amsmath, amsthm}
\newtheorem{theorem}{Theorem}

\usepackage{subcaption}

\usepackage{stfloats}

\begin{document}

\title{Robust Opinion Maximization}


\author{Nikos Theologis}
\affiliation{%
  \institution{University of Ioannina}
  \city{Ioannina}    
  \country{Greece}
}
\affiliation{%
  \institution{Archimedes/Athena Research Center}
  \city{Athens}    
  \country{Greece}
}
\email{n.theologis@athenarc.gr}

\author{Evaggelia Pitoura}
\affiliation{%
  \institution{University of Ioannina}
  \city{Ioannina}    
  \country{Greece}
}
\affiliation{%
  \institution{Archimedes/Athena Research Center}
  \city{Athens}    
  \country{Greece}
}
\email{pitoura@uoi.gr}

\author{Evimaria Terzi}
\affiliation{%
  \institution{Boston University}
  \city{Boston}  
  \country{USA}
}
\email{evimaria@bu.edu}

\author{Panayiotis Tsaparas}
\affiliation{%
  \institution{University of Ioannina}
  \city{Ioannina}    
  \country{Greece}
}
\affiliation{%
  \institution{Archimedes/Athena Research Center}
  \city{Athens}    
  \country{Greece}
}
\email{tsap@uoi.gr}

\begin{abstract}
Opinion-formation models are fundamental for analyzing social behavior and understanding social dynamics.
The Friedkin-Johnsen (FJ) model is the most commonly used model in this line of research.
In FJ, the expressed opinion of an individual evolves over time as the combination of the opinions of their peers and a fixed inner opinion, weighted by their stubbornness. 
A problem that has received considerable attention is the problem of \emph{Opinion Maximization}, where the goal is to adjust the parameters of the FJ model, such as stubbornness, or inner opinions, to maximize the average expressed opinion, i.e., the public opinion.
Most existing work assumes full knowledge of the parameters of the model, which is unrealistic. 
In this paper, we introduce the problem of \emph{Robust Opinion Maximization}, where we assume uncertainty for the inner opinions of the nodes, and we maximize the public opinion under worst-case assumptions on this uncertainty. 
Using tools from Robust Optimization, we formulate our problem as a max-min problem, and we develop an optimal two-step solution that is both highly intuitive and efficient.  
Experiments on several real-world networks demonstrate that our approach can achieve robust solutions for million-node graphs in minutes, and reveal that our robust strategies can significantly mitigate the impact of noise compared to standard baselines. 
\end{abstract}

\begin{CCSXML}
<ccs2012>
   <concept>
       <concept_id>10002951.10003260.10003282.10003292</concept_id>
       <concept_desc>Information systems~Social networks</concept_desc>
       <concept_significance>500</concept_significance>
       </concept>
   <concept>
       <concept_id>10003752.10003809.10003853.10003858</concept_id>
       <concept_desc>Theory of computation~Network optimization</concept_desc>
       <concept_significance>500</concept_significance>
       </concept>
 </ccs2012>
\end{CCSXML}

\ccsdesc[500]{Information systems~Social networks}
\ccsdesc[500]{Theory of computation~Network optimization}

\keywords{opinion formation, robust optimization}



\maketitle

\section{Introduction}\label{sec:intro}




As social media networks have become primary platforms for public debate, understanding how opinions form and spread is increasingly important. While opinion dynamics has long been studied~\cite{social-dynamics-survey}, the scale of modern digital platforms has renewed interest in these models as tools for analyzing large-scale social systems.

One of the most established frameworks in this area is the Friedkin and Johnsen (FJ) model \cite{Friedkin_Johnsen}. In this model, individuals are represented as nodes in a social network, each associated with an \emph{inner} and an \emph{expressed} opinion. The inner opinion is a fixed value capturing the inherent beliefs of an individual, while the expressed opinion evolves through social interactions. Specifically, individuals iteratively update their expressed opinions by balancing their inner beliefs against the opinions of their neighbors. This balance is governed by a \emph{stubbornness} parameter, which determines the extent to which an individual adheres to their inner belief versus incorporating the opinions of their peers.


The FJ model has found a wide range of applications in social network analysis~\cite{miyauchi2026survey}, including modeling campaigns~\cite{gionis},  polarization~\cite{matakos2017measuring,Musco,NEURIPS2021} and  conflict~\cite{chen2018quantifying, papachristou2025echoesdisagreementmeasuringdisparity}.
In this work, we focus on the problem of \emph{opinion maximization}~\cite{gionis}, where the goal is to  maximize the average expressed opinion, i.e., the public opinion, by adjusting the FJ parameters. 
Several variants of this problem have been considered that adjust the inner or expressed opinions~\cite{gionis,Opinion_Optimization,Wang2025}, the stubbornness parameters~\cite{Tsourakakis}, or the graph structure~\cite{ZHU2025115090}.
Existing work on opinion maximization assumes full knowledge of the model parameters.
However, in real-world scenarios, assuming 
precise data for these parameters is highly unrealistic due to measurement errors, hidden biases, or malevolent actors. 

To address this shortcoming, we introduce the novel \emph{Robust Opinion Maximization} problem.
Robust optimization \cite{bental1998robust,bertsimas2004price, bertsimasCaramanis2011Survey} studies how to solve optimization problems under worst-case scenarios for uncertain data. 
In our context, we assume uncertainty in the inner opinions, and aim to maximize the \emph{worst-case} public opinion, by adjusting the stubbornness values of the individuals.

To solve this problem, we propose a two-step algorithmic solution. In Step 1, we find the optimal \emph{influence vector}. 
We utilize the notion of \emph{influence}, as defined in \cite{theologis2026fairness} and by \citet{FriedkinRecent}, which determines the influence of the noisy inner opinions on the optimal worst-case public opinion.
In Step 2, we find the stubbornness values that achieve the optimal influence vector.
%

Using standard robust optimization techniques, we can formulate the Step 1 problem  as a Linear Program (LP) or a Second-Order Cone Program (SOCP), both solvable efficiently with off-the-shelf solvers. However, we show that, in our setting, the problem admits specialized $\mathcal{O}(n \log n)$ algorithms, leading to substantial improvements in scalability.
%
For the second step,
we derive a mapping between the influence vector and the stubbornness parameters, and we provide a linear-time algorithm for computing these parameters. 
Overall, the proposed two-step approach is highly scalable and is applicable even to large real-world networks.

In summary, 
we make the following contributions:
\squishlist
    \item 
    We define the novel problem of {\ProblemFull} (\ProblemShort). 
    We also define a budgeted version of the problem, {\ProblemBudget}, where we constrain the change in the input influence values. To the best of our knowledge we are the first to introduce robust optimization in opinion dynamics problems.
    \item 
    We design an optimal two-step solution, and we provide  scalable algorithms for each step, for both {\ProblemShort} and {\ProblemBudget}. For the first step, we propose $\bigO(n \log n)$ algorithms to calculate the optimal influence vector, and for the second step, we derive a linear-time methodology to compute stubbornness parameters given a target influence vector.
    \item 
    We conduct experiments on large-scale real-world networks that demonstrate the properties of our algorithms and the scalability of our approach.
\squishend
The rest of the paper is structured as follows. In Section~\ref{sec:related_work} we review related work.
In Section~\ref{sec:definitions}, we present the FJ model and the opinion maximization problem, we introduce the noise model, and we define the Robust Opinion Maximization problem. In Section~\ref{sec:algorithm}, we present our solution.  In Section~\ref{sec:budget}, we define the budgeted version of our problem. In Section~\ref{sec:experiments} we present our experimental evaluation. We conclude the paper in Section~\ref{sec:conclusion}.

\section{Related Work}
\label{sec:related_work}

To the best of our knowledge, we are the first to introduce robust optimization in opinion dynamics, define the \ProblemShort and \ProblemBudget problems, and design a scalable framework for their solution. 
However, our research is related to existing work on \emph{interventions} and \emph{uncertainty} in opinion-formation dynamics.

\spara{Interventions in opinion-formation models:}
There is a substantial body of work dedicated to modifying the parameters of the Friedkin-Johnsen (FJ) model to optimize specific social objectives. 
A recent survey \cite{miyauchi2026survey} provides a comprehensive overview. 
Specifically, works by \cite{Howbad, Musco, NEURIPS2021} focus on altering the network topology to minimize polarization or reduce overall social cost. 
By directly adjusting the opinions, \cite{gionis, matakos2017measuring} introduced the problems of maximizing public opinion and minimizing polarization. Recent works by \cite{Wang2025, Sun2025} have also focused on scaling these types of structural and opinion-based adjustments to massive networks. 
A line of research by \cite{Tsourakakis, Xu2022, theologis2026fairness} focuses on the study of stubbornness and how such adjustments can be used to optimize the public opinion and fairness. Our work extends this literature by optimizing stubbornness interventions against worst-case noise in the inner opinions.

\spara{Uncertainty in opinion dynamics:}
While robustness has been studied for discrete influence cascade models \cite{He2016, Chen2016}, no such work has been made for continuous opinion dynamics like the FJ model.
While most algorithmic interventions for continuous opinion-formation models assume perfect model knowledge, literature addressing such uncertainty remains surprisingly sparse.
An early attempt was made by \cite{chen2018quantifying}, who addressed unknown inner opinions by minimizing the conflict risks over all possible binary opinion assignments. 
Recently, several works have tackled polarization minimization under unknown inner opinions: \cite{chaitanya2024sdp} utilized semidefinite programming relaxations, \cite{cinus2025minimizing} treated it as a budgeted query problem to reconstruct missing opinions, and \cite{cinus2026online} employed multi-armed bandits to learn opinions sequentially from noisy feedback. 
Also \cite{cianfanelli2025optimal} investigated uncertain network topologies and \cite{Tu2023} tackled limited information via an adversarial framework. Crucially, none of these approaches employ \emph{robust optimization}.


\section{Definitions}
\label{sec:definitions}

In this section, we provide the necessary definitions used throughout the paper, and we define the Robust Opinion Maximization problem. We use lower case letters for scalars ($\theta$, $d$, $q_i$), lowercase \textbf{bold} letters for column vectors ($\mathbf{q}$, $\mathbf{s}$, $\boldsymbol{\delta}$), and uppercase letters for matrices ($A$, $W$, $Q$). We denote the identity matrix as $I$ and the column vector full of ones as $\mathbf{1}$.

\subsection{Opinion-Formation Model }
We assume a directed 
graph $G=(V,E)$ consisting of $|V|=n$ nodes and $|E|=m$ edges.
We consider the popular Friedkin and Johnsen (FJ) model~\cite{Friedkin_Johnsen}, where
each node $i \in V$ has a fixed \emph{inner opinion} $s_i \in [0,1]$  and an \emph{expressed opinion} $z_i$.
The former is fixed and a characteristic of the node itself. 
The latter is the result of the iterative opinion-formation process: 
\begin{align}
z^{(t)}_i = a_i s_i + (1-a_i)\sum_{j \in N_i} w_{ij} z_j^{(t-1)} \, .
\end{align}
In the above equation, $N_i = \{j \in V \mid (i,j) \in E\}$ is the neighborhood of node $i$ in $G$, and 
$w_{ij}$ quantifies the importance that node $i$ places on the expressed opinion of node $j$. 
The values of $w_{ij}$ 
define a \emph{row-stochastic} matrix $W \in [0,1]^{n \times n}$, with $w_{ij}>0$ for $(i,j)\in E$, and  
$\sum_{j=1}^n w_{ij}=1$, which we call the \emph{interaction matrix}. Finally, $a_i\in [0,1]$ corresponds to the \emph{stubbornness} of node $i$ and quantifies how determined node $i$ is to adhere to their own opinion.

Let \stubbo, {\inner} and {\expressed} denote the $n$-dimensional vectors of the stubbornness values, the inner and the expressed opinions of the nodes in $V$. We can 
write the update equation for the FJ model as:
\begin{align}\label{eq:dynamics}
{\expressed}^{(t)} = A{\inner} + (I-A)W{\expressed}^{(t-1)},
\end{align}
where
$A = \text{Diag}({\stubbo})$ is a diagonal matrix with $A[i,i]=a_i$. 
If  the  matrix $W$ is \emph{irreducible} ({\ie}, the underlying graph $G$ is strongly connected), and at least one node has stubbornness $a_i>0$, then 
vector $\expressed$ converges to a steady state where:
\begin{equation}
\label{eq:steady-state}
{\expressed} = \left(I-(I-A)W\right)^{-1}A{\inner}=Q\inner .
\end{equation}

We call $Q$ the \emph{influence matrix}, which is row-stochastic \cite{Friedkin_Johnsen, Howbad}, i.e., 
$Q[i,j]\in [0,1]$ and $\sum_{j \in V} Q[i,j] = 1$.
Note that $z_i = \sum_{j\in V} Q[i,j] s_j$, and thus, element $Q[i,j]$ determines the extent to which the expressed opinion $z_i$ of node $i$ is influenced by the inner opinion $s_j$ of node $j$.
We then define the \emph{influence}  $q_i = \frac{1}{n}\sum_{j \in V}Q[j,i]$ of node $i$ (and the corresponding \emph{influence vector} \influence) as the average influence node $i$ exerts on all the nodes in $V$.


Finally, we define
$\public = \frac 1n \sum_{i\in V}z_i$ as the \emph{public opinion} in the network.
Using the above definitions, we can see that
$\public = \sum_{i \in V} q_i s_i$, thus, $q_i$ is the influence of the inner opinion $s_i$ on the public opinion \public. 
We refer to the quantity $c_i = q_i s_i$ as the \emph{contribution} of the node $i$ to the public opinion. 
Note that due to the row-stochasticity of matrix $Q$ the influence vector satisfies $\sum_{i=1}^{n} q_i = 1$ and $q_i \geq 0, \forall i$. 
Thus, vector $\influence$ can be viewed as a \emph{distribution} of influence.
Influence is also defined as \emph{influence centrality} by \citet{FriedkinRecent}.

The public opinion $\public(\inner, \stubbo, $W$)$ is a function of the parameters of the FJ model. The \emph{Opinion Maximization} problem \cite{gionis} 
aims to maximize the public opinion, by modifying these parameters. Several variations of this problem have been considered in the literature \cite{Tsourakakis,Wang2025,ZHU2025115090}. In this work, we focus on the case where we adjust the stubbornness of the nodes. 

\subsection{Modeling Uncertainty}

Previous work on opinion maximization assumes that we have full knowledge of the FJ model parameters. The assumption that inner opinions $\inner$ are known exactly is a major limitation.
In real-world social network applications, inner opinions are often estimated through surveys, historical data, or proxy metrics, making them prone to estimation errors.
To capture this, we introduce uncertainty in the vector $\inner$. 
We assume the inner opinions are noisy and lie within an \emph{uncertainty set} $\mathcal{U}_s$ centered around the \emph{input inner opinion} vector $\inner \in [0,1]^n$. We adopt a \emph{multiplicative} uncertainty framework, and we define the \emph{noisy inner opinion} vector {\innernoisy} as:
\begin{equation}
    \innernoisy = \inner \circ (\mathbf{1} - \noise),
\end{equation}
where $\noise \in \mathbb{R}^n$ represents the vector of \emph{noise}, and $\circ$ denotes the element-wise product.
Each entry $\delta_i$ represents the \emph{percentage degradation} of the input inner opinion $s_{i}$. 
Because we will address an Opinion Maximization Problem \ref{problem:romsa}, the \emph{worst-case} realization of this noise corresponds strictly to a \emph{reduction} for the inner opinions. Thus, without loss of generality, the worst-case noise values will always fall within the range $\delta_i \in [0,1]$. 

Following classic robust optimization techniques \cite{bental1998robust, bertsimas2004price, bertsimasCaramanis2011Survey}, we consider two distinct models for the uncertainty set.

\spara{Polyhedral Uncertainty Set:}
This set is defined by the vector $\polybounds \in [0,1]^n$ and the scalar noise budget $\gamma \in [0,n]$:
\begin{align}
\label{noise poly}
\mathcal{U}_s^{\text{poly}} = \left\{ \innernoisy = \inner \circ (\mathbf{1} - \noise) \; \middle| \;
\begin{aligned}
    & 0 \le \delta_i \le \rho_i, & \forall i \in \{1, \dots, n\}, \\
    & \sum_{i=1}^{n} \delta_i \le \gamma
\end{aligned}
\right\}.
\end{align}

Parameter $\rho_i$ quantifies the confidence in the input inner opinion $s_{i}$ for each node $i$. The \emph{noise budget} $\gamma$ controls the aggregate reliability of the system by bounding the total sum of percentage losses across all nodes. Since $\delta_i \in [0,1]$, the maximum possible value for $\gamma$ is $n$. In the extreme case when $\polybounds=\mathbf{1}$, the noise budget $\gamma$ represents the maximum number of nodes whose inner opinions can be reduced to zero.

\spara{Ellipsoidal Uncertainty Set:}
This set is defined by the matrix ${\Sigma} \in \mathbb{R}^{n \times n}$ and the scalar noise budget $\omega \in \mathbb{R}_{\ge 0}$:
\begin{equation}
\label{noise_elips}
\mathcal{U}_s^{\text{ellip}} = \left\{ \innernoisy = \inner \circ (\mathbf{1} - \noise) \; \middle| \; \noise^T {\Sigma}^{-1} \noise \le \omega^2 \right\}.
\end{equation}

The diagonal entries of the matrix, $\Sigma [i,i] =\sigma_i^2$, represent the variance of $\delta_i$, with $\sigma_i \in [0,1]$ serving as a measure of the local confidence in the input inner opinion $s_{i}$. 
Furthermore, the off-diagonal entries $\Sigma [i,j]$ allow us to model \emph{correlations} between the noises, making it possible to model groups of nodes whose opinions are likely to shift together due to echo chambers.
Thus we have \emph{correlated} or \emph{uncorrelated} ellipsoidal uncertainty, with matrix $\Sigma$ being diagonal in the latter case.
Finally, to guarantee that worst-case noise {\noise} results in valid noisy opinions {\innernoisy} we bound the {noise budget} $\omega \in \left[0, \, 1 / \max_i \sigma_i \right]$. 
For more details about the ellipsoidal uncertainty set see Appendix \ref{appendix:covariance_setup}.


\subsection{The \ProblemShort Problem}
\label{sec:problem}

We now formally define the {\ProblemFull} (\ProblemShort) problem.
For noisy inner opinions, i.e., \innernoisy, we also have \emph{noisy expressed opinions} $\hat{z}_i = \sum_{j\in V} Q[i,j] \hat{s}_j$. 
Thus, we define the \emph{noisy public opinion} {\publicnoisy} as:
\begin{align}
    \publicnoisy(\stubbo, \innernoisy) = \frac 1n \sum_{i\in V}\hat{z}_i = \frac 1n \sum_{i\in V}\sum_{j\in V} Q[i,j] \hat{s}_j = \influence (\stubbo)^T \innernoisy,
\end{align}
Our goal is to find the stubbornness vector {\stubbo} that maximizes the noisy public opinion $\publicnoisy$, given the uncertainty set $\mathcal{U}_s$. Formally, we define the \ProblemShort problem as follows:

\begin{problem}[\ProblemFull -- \ProblemShort]
\label{problem:romsa}
Given a directed social network graph $G = (V, E)$ with interaction matrix $W$, a vector of input inner opinions $\inner \in [0,1]^n$, and the uncertainty set of the inner opinions $\mathcal{U}_s$, 
find a stubbornness vector $\stubbotarg \in [0,1]^n$ such that:
\begin{align} 
\label{ROMSA}
\stubbotarg = \arg\max_{\stubbo\in [0,1]^n} \min_{\innernoisy \in \mathcal{U}_s} \publicnoisy(\stubbo, \innernoisy)\, . 
\end{align}
\end{problem}

The \ProblemShort problem aims to maximize the public opinion in the network while guarding against worst-case noise in the inner opinions.
We choose stubbornness as our control variable because we view it as a feasible non-intrusive intervention, compared to altering the network topology, or the inner opinions.  
Intuitively, adjusting stubbornness implies empowering some individuals to be more assertive in their views (increasing $a_i$) while encouraging others to be more receptive to their social circle (decreasing $a_i$). 
Because it is physically interpretable and effective, adjusting stubbornness has emerged as a standard mechanism in the recent literature on the FJ model~\cite{Tsourakakis,Xu2022,theologis2026fairness}.

From the computational point of view, 
the maximization and minimization components of {\ProblemShort} (see Eq.~\ref{ROMSA}) are easy to solve in isolation. 
For the outer maximization problem, in the absence of uncertainty (the \emph{nominal solution} for  $\innernoisy=\inner$), the optimal strategy is to identify the node with the highest input inner opinion, $i^\ast = \arg\max_{i\in V} s_{i}$, and concentrate all network influence on this node by setting $a_{i^\ast} = 1$ and $a_j = 0$ for all $j \neq i^\ast$. 
Conversely, if the stubbornness vector $\mathbf{a}$ is fixed, the inner minimization reduces to an optimization over the uncertainty set $\mathcal{U}_s$. For both polyhedral and ellipsoidal uncertainty sets, this has a straightforward analytical or greedy optimal solution (see Appendix~\ref{appendix:inner_min}).
Indicatively, for the polyhedral noise, the worst case scenario is when all the noise budget is allocated to the nodes that have the highest \emph{contribution} $c_i = q_i s_i$ to the public opinion.

Taken together, however, these two components couple to form a complex problem. 
With the introduction of uncertainty, assigning total influence to node $i^\ast = \arg\max_i s_{i}$ is not necessarily a good strategy. In the worst case, the noise $\noise$ can severely reduce the public opinion by concentrating entirely on this single node $i^\ast$. 
In general, the worst-case noise 
depends on the influence distribution $\influence(\stubbo)$, 
``targeting'' nodes that are promoted by $\influence$.
Consequently, the {\ProblemShort} problem is an intricate balancing act. 
To maximize public opinion, we must concentrate influence on a few nodes with high inner opinions, 
while also diversifying enough to protect against worst-case noise.



\section{Solving the \ProblemShort Problem}
\label{sec:algorithm}

The \emph{max-min} structure of the \ProblemShort problem  is computationally challenging due to the non-convex relationship between the stubbornness vector $\stubbo$ and the influence distribution $\influence(\stubbo)$. 
To solve the \ProblemShort problem optimally in a tractable manner, we propose a two-step decoupling solution:

\squishlist
\item In \textbf{Step 1}, we 
treat the influence vector $\influence$ as our primary control variable
and find the optimal \influencetarg, which we call the \emph{target influence}; 
\item In \textbf{Step 2}, we find the stubbornness vector $\stubbotarg \in [0,1]^n$ that achieves this target influence $\influencetarg$.
\squishend

\subsection{Step 1: Finding the Optimal Target Influence}

In Step 1, we use {\influence} as our control variable, and we seek the optimal influence vector $\influencetarg$ that maximizes the worst-case noisy public opinion under uncertainty, $\publicnoisy(\influence) = \influence^T \innernoisy$. That is: 
\begin{align}\label{eq:transformation}
    \influencetarg = \arg\max_{\influence} \min_{\innernoisy \in \mathcal{U}_s} \influence^T \innernoisy.
\end{align}
Since the problem described by Eq.~\eqref{eq:transformation} is linear in $\influence$ and $\innernoisy$, we can apply standard robust-optimization techniques \cite{bental1998robust, bertsimas2004price, bertsimasCaramanis2011Survey} to reformulate the \emph{max-min} problem into a tractable maximization problem. 
These transformations lead to the following Lemma. 

\begin{lemma}[Step 1 Reformulations]
\label{lemma:reform}
For the polyhedral uncertainty set $\mathcal{U}_s^{\text{poly}}$, by introducing dual variables $\theta$ and $p_i$, the {max-min} problem defined by Eq.~\eqref{eq:transformation} is solved optimally by the following Linear Program (LP):
\begin{equation}
\label{eq:poly_robust_main}
\begin{aligned}
    & \max_{\influence, \theta, \boldsymbol{p}} \quad && \influence^T \inner - \gamma \theta - \sum_{i=1}^n \rho_i p_i \\
    & \text{\emph{subject to}} \quad && \theta + p_i \ge q_i s_{i}, \quad \forall i \in \{1, \dots, n\} \\
    & && \sum_{i=1}^n q_i = 1, \quad  q_i \ge 0, \quad \forall i \in \{1, \dots, n\} \\
    & && \theta \ge 0, \quad p_i \ge 0, \quad \forall i \in \{1, \dots, n\}
\end{aligned}
\end{equation}
For the ellipsoidal uncertainty set $\mathcal{U}_s^{\text{ellip}}$, by introducing an auxiliary variable $t$, the {max-min} problem defined by Eq.~\eqref{eq:transformation} is solved optimally by the following Second-Order Cone Program (SOCP):
\begin{equation}
\label{eq:ellip_robust_main}
\begin{aligned}
    & \max_{\influence, t} \quad && \influence^T \inner - \omega t \\
    & \text{\emph{subject to}} \quad && \left\| {\Sigma}^{1/2} \text{\emph{diag}}(\inner) \influence \right\|_2 \le t \\
    & && \sum_{i=1}^n q_i = 1, \quad  q_i \ge 0, \quad \forall i \in \{1, \dots, n\}
\end{aligned}
\end{equation}
\end{lemma}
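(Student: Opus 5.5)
The plan is to fix $\influence$ in the simplex, solve the inner minimization in closed form (polyhedral case via LP duality, ellipsoidal case via Cauchy--Schwarz), and substitute the result back into the outer maximization. For fixed $\influence$ we have $\influence^T\innernoisy = \influence^T\inner - \sum_i q_i s_i \delta_i$. So $\min_{\innernoisy\in\mathcal{U}_s}\influence^T\innernoisy = \influence^T\inner - \max_{\noise} \mathbf{c}^T\noise$ with $\mathbf{c} = \text{diag}(\inner)\influence$, i.e.\ $c_i = q_i s_i \ge 0$. The whole proof therefore reduces to computing the support function of the noise set at $\mathbf{c}$, and then merging the resulting representation with the outer maximization.

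\textbf{Polyhedral case.} The inner problem is the LP
\[
\max_{\noise}\ \mathbf{c}^T\noise \quad \text{s.t.}\quad \sum_i\delta_i\le\gamma,\quad 0\le\delta_i\le\rho_i .
\]
It is feasible ($\noise=0$) and bounded, so strong duality holds. Attach a multiplier $\theta\ge 0$ to the budget constraint and $p_i\ge 0$ to each $\delta_i\le\rho_i$. The dual is
\[
\min_{\theta,\mathbf{p}}\ \gamma\theta+\sum_i\rho_i p_i \quad \text{s.t.}\quad \theta+p_i\ge c_i=q_i s_i,\quad \theta,\ p_i\ge 0 .
\]
Hence the worst-case value equals $\influence^T\inner - \min_{\theta,\mathbf{p}}(\gamma\theta+\sum_i\rho_i p_i)$, which is $\max_{\theta,\mathbf{p}}\big(\influence^T\inner-\gamma\theta-\sum_i\rho_ip_i\big)$ over the same dual-feasible set. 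The outer problem is then a max over $\influence$ of a max over $(\theta,\mathbf{p})$, so the two can be merged into a single joint maximization. The constraints $q_is_i\le\theta+p_i$ are linear jointly in $(\influence,\theta,\mathbf{p})$ because $\inner$ is data. This yields exactly the LP~\eqref{eq:poly_robust_main}.

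\textbf{Ellipsoidal case.} Here $\max\{\mathbf{c}^T\noise : \noise^T\Sigma^{-1}\noise\le\omega^2\}$ is evaluated by substituting $\noise=\Sigma^{1/2}\mathbf{u}$ with $\|\mathbf{u}\|_2\le\omega$. Cauchy--Schwarz then gives the value $\omega\|\Sigma^{1/2}\mathbf{c}\|_2$, attained at $\noise=\omega\,\Sigma\mathbf{c}/\|\Sigma^{1/2}\mathbf{c}\|_2$. The robust objective is therefore $\influence^T\inner-\omega\|\Sigma^{1/2}\text{diag}(\inner)\influence\|_2$. Introducing the epigraph variable $t\ge\|\Sigma^{1/2}\text{diag}(\inner)\influence\|_2$ and maximizing $\influence^T\inner-\omega t$ gives the SOCP~\eqref{eq:ellip_robust_main}. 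The two problems are equivalent because $\omega\ge0$ makes $t$ tight at the optimum. If $\omega=0$, the value still coincides with the original, and an optimal $\influence$ of the SOCP remains optimal for Eq.~\eqref{eq:transformation}.

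\textbf{Main obstacle: the sign restrictions on the noise.} The ellipsoidal set as written does not impose $\delta_i\ge0$. The maximizer $\Sigma\mathbf{c}$ could have negative entries when $\Sigma$ has negative correlations, so I would take the definition literally as the full ellipsoid; the support-function computation is exact in that case. The paper's claim that the worst-case noise lies in $[0,1]$ is then only relevant to validity of $\innernoisy$, and the bound $\omega\le1/\max_i\sigma_i$ ensures $|\delta_i|\le\omega\sigma_i\le1$ along the maximizer (by Cauchy--Schwarz on each coordinate). In the polyhedral case the sign restriction $\delta_i\ge0$ is explicit and is already handled by the dual. The only remaining care point is to remark that swapping the max-of-max is legitimate, which is immediate since both are maximizations over a product structure.
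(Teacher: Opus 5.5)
Your proposal is correct and is essentially the paper's proof. In the polyhedral case you dualize the inner LP with multipliers $\theta, p_i \ge 0$ and fold the resulting minimization into the outer maximization; in the ellipsoidal case you evaluate the support function as $\omega\|\Sigma^{1/2}\text{diag}(\inner)\influence\|_2$ and introduce the epigraph variable $t$. One small wording correction: the dual feasible set $\{\theta + p_i \ge q_i s_i\}$ depends on $\influence$, so there is no product structure. The merge is nonetheless valid, because nested maxima always equal a joint maximum over the set of feasible triples $(\influence,\theta,\mathbf{p})$.
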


The step-by-step derivations utilizing strong duality are provided in Appendix \ref{appendix:robust_proofs}. 
Solving the LP and SOCP defined in Lemma~\ref{lemma:reform} has worst-case $\bigO(n^3)$ complexity~\cite{Boyd2004}. In practice, it can be solved efficiently, even for large graphs, using off-the-shelf solvers. 
However, we can derive intuitive $\mathcal{O}(n \log n)$ algorithms for computing the optimal target influence \influencetarg. 
Lemmas \ref{lm:poly_hetero_binary} and \ref{lemma:ellip_closed_form} characterize these solutions for polyhedral and uncorrelated ellipsoidal uncertainty respectively.

\begin{lemma}[Robust Solution for Polyhedral Uncertainty]
\label{lm:poly_hetero_binary}
For the polyhedral uncertainty set $\mathcal{U}_s^{\text{poly}}$ with local bounds $\polybounds \in [0, 1]^n$ and noise budget $\gamma \in [0, n]$, the optimal influence $\influencetarg$ falls into two cases:
\squishlist
    \item \textbf{Case 1 (Full-Reliance):} All influence is allocated to a single node $\ell$, yielding an influence vector $\influencetarg$, with $q_{\ell}^* = 1$ and $q_{j \neq \ell}^* = 0$. The resulting public opinion is:
    \begin{equation}
        \publicnoisy_{\ell}(\influencetarg) = s_{\ell}(1 - \rho_{\ell}).
    \end{equation}
    \item \textbf{Case 2 (Equalization):} Influence is distributed across the top-$k$ nodes with the highest input inner opinions $s_{i}$, such that their contribution  $c_i = q_i s_{i}$  is equalized. The influence vector $\influencetarg$ satisfies:
    \begin{align}
        q_i^* = 
        \begin{cases} 
            \frac{1}{s_{i}} \left( \sum_{j=1}^k \frac{1}{s_{j}} \right)^{-1}, & \text{if } i \le k \\
            0, & \text{if } i > k.
        \end{cases}
    \label{eq:closed_poly} 
    \end{align}
    The resulting public opinion achieved by this distribution is:
        \begin{equation}
        \publicnoisy_{k}(\influencetarg) = \frac{k - \gamma}{\sum_{j=1}^k \frac{1}{s_{j}}}.
        \end{equation}
\squishend

The optimal influence vector can be computed as:
\begin{equation}
    \influencetarg = \arg \max \left( \max_{\ell \in \{1,\dots,n\}} \publicnoisy_{\ell}(\influencetarg), \ \max_{k \in \{\lfloor\gamma\rfloor+1,\dots,n\}} \publicnoisy_{k}(\influencetarg) \right).
\end{equation}
\end{lemma}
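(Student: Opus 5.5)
The plan is to work directly with the LP of Lemma~\ref{lemma:reform}, after the change of variables $c_i = q_i s_i$. I assume nodes are indexed so that $s_1 \ge s_2 \ge \dots \ge s_n > 0$; nodes with $s_i=0$ can never receive useful influence. I also assume, as the formula $\publicnoisy_k = (k-\gamma)/\sum_{j\le k} 1/s_j$ implicitly does, that the budget $\gamma$ is the binding constraint on the equalized nodes, i.e.\ $\gamma \le \sum_{i\le k}\rho_i$ there.

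First I eliminate $\boldsymbol{p}$. For fixed $\influence$ and $\theta$, the best choice is $p_i = (c_i - \theta)_+$. The objective then becomes
\begin{equation*}
F(\mathbf{c},\theta) = \sum_i c_i - \gamma\theta - \sum_i \rho_i (c_i-\theta)_+ ,
\end{equation*}
to be maximized over $\theta \ge 0$ and over $\mathbf{c}\ge 0$ with $\sum_i c_i/s_i = 1$. This is exactly the fractional-knapsack dual of the inner minimization.

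Next I fix $\theta$ and optimize over $\influence$. Spending one unit of influence on node $i$ earns marginal value $s_i$ while $c_i<\theta$, and only $s_i(1-\rho_i)$ once $c_i>\theta$. Because the feasible set is a simplex in $\influence$, a greedy exchange argument applies. Any node with $0<c_i<\theta$ or with $c_i>\theta$ can trade mass with another node without lowering $F$. Hence an optimal $\influence$ fills some set of nodes exactly to level $\theta$ and places any leftover mass on the single node with the largest slope. Filling to level $\theta$ should go to the nodes with the largest $s_i$, because the same $c$-level is reached with the least influence $q_i=\theta/s_i$.

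The key step, and the one I expect to be the main obstacle, is eliminating the mixed configurations that have both level-$\theta$ nodes and an overflow node. Fix the combinatorial pattern: the set $S$ at level $\theta$ and the overflow node $\ell$. Then the leftover mass is $1-\theta\sum_{j\in S}1/s_j$, and $F$ is affine in $\theta$ on the interval where this pattern is feasible. An affine function attains its maximum at an endpoint of that interval. At $\theta=0$ all mass sits on $\ell$, giving $c_\ell=s_\ell$ and $F=s_\ell(1-\rho_\ell)$; this is Case~1, optimized over $\ell$. At the other endpoint the leftover mass vanishes, so all positive $c_i$ equal a common value $c$; this is Case~2. The care needed is to check that pattern changes at interior $\theta$ do not create other maxima. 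I would handle this by noting that $F$ is concave and piecewise linear in $(\mathbf{c},\theta)$, so optima lie at breakpoints, and every breakpoint is of one of these two types.

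For Case~2 with support $S$ of size $k$, normalization gives $c = (\sum_{j\in S}1/s_j)^{-1}$. The worst-case adversary removes $\gamma c$, since the budget is binding, so the value is $c(k-\gamma)$. This value is positive only when $k>\gamma$, i.e.\ $k\ge\lfloor\gamma\rfloor+1$. For fixed $k$, $c$ is maximized by taking the top-$k$ inner opinions, which yields \eqref{eq:closed_poly} and the stated $\publicnoisy_k$. Taking the best of the $n$ Case~1 candidates and the Case~2 candidates $k\in\{\lfloor\gamma\rfloor+1,\dots,n\}$ gives $\influencetarg$. Sorting costs $\bigO(n\log n)$, and prefix sums of $1/s_j$ evaluate every candidate in $\bigO(n)$ total.
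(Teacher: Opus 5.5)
Your argument is correct and reaches the paper's dichotomy, but it eliminates the other set of variables. The paper keeps the duals. It fixes the active set $A$ and claims every active node satisfies $q_is_i=\theta+p_i$. It then substitutes $q_i=(\theta+p_i)/s_i$ into $\sum_i q_i=1$, which gives a one-constraint fractional knapsack in $(\theta,\boldsymbol{p}_A)$. The optimum of that knapsack puts all weight on the variable with the best value/cost ratio, either $R_\theta=(|A|-\gamma)/\sum_{i\in A}1/s_i$ or $R_\ell=s_\ell(1-\rho_\ell)$. You instead eliminate $\boldsymbol{p}$ via $p_i=(c_i-\theta)_+$, solve the fixed-$\theta$ allocation greedily, and then move $\theta$ along an interval on which $F$ is affine.

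The knapsack view delivers both candidate values at once. However, its preliminary claim (no active node strictly below the threshold, else a strict improvement) holds only without loss of generality, and establishing it requires moving $\theta$. At fixed $\theta$, a partially filled node can be optimal. For example, with $\mathbf{s}=(1,\tfrac12)$, $\gamma=\tfrac12$, $\boldsymbol{\rho}=\mathbf{1}$, every $\mathbf{q}=(t,1-t)$ with $t\in[\tfrac13,1]$ is optimal, with $\theta=t$ and node $2$ strictly below the threshold for $t\in(\tfrac13,1)$. Your affine-in-$\theta$ step is precisely the justification the paper omits. Your justification of the range $k\ge\lfloor\gamma\rfloor+1$ (other equalizations have nonpositive value, dominated by Case 1) is also something the paper leaves implicit.

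Two small repairs are needed:
\begin{enumerate}
\item Treat a leftover node below level $\theta$ as well as above it. For the overflow pattern, the upper endpoint is where $c_\ell$ drops to $\theta$, not where the leftover vanishes. For a leftover node with $c_\ell<\theta$, the endpoints are $c_\ell=0$ and $c_\ell=\theta$, both equalizations.
\item Drop the concavity/breakpoint remark, which you neither prove nor need. It suffices to start from an optimal $(\mathbf{c}^*,\theta^*)$, normalize $\mathbf{c}^*$ at $\theta^*$, and slide $\theta$ along that single pattern to an endpoint.
\end{enumerate}

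The assumption $\gamma\le\sum_{i\le k}\rho_i$ should be dropped rather than imposed as a hypothesis. You evaluate Case 1 through $F$ but Case 2 through the adversary; evaluate both through $F$. At the equalization endpoint $p_i=0$ and $F=c(k-\gamma)$ regardless of the $\rho_i$. Since $\max F$ is the robust optimum by Lemma~\ref{lemma:reform}, the selection rule needs nothing more.

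You are right that $c(k-\gamma)$ is the true worst case of that vector only when the budget binds (likewise $s_\ell(1-\rho_\ell)$ only when $\rho_\ell\le\gamma$). But if $\gamma>\sum_{i\le k}\rho_i$, the true value $c\,(k-\sum_{i\le k}\rho_i)=\sum_{i\le k}(1-\rho_i)/\sum_{i\le k}s_i^{-1}$ is a mediant of the ratios $s_i(1-\rho_i)$. It is therefore dominated by a Case 1 candidate. Moreover, any equalization candidate attaining the maximum automatically has a binding budget, since otherwise its true value would exceed the optimum.
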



Lemma \ref{lm:poly_hetero_binary} 
suggests the following algorithm
for computing $\influencetarg$. 
First, we sort the input inner opinions $\inner$ in descending order. 
Next, we compute $\publicnoisy_{\ell}$ for all nodes and $\publicnoisy_{k}$ for $k \in \{\lfloor\gamma\rfloor+1,\dots,n\}$.
Finally, we extract the global maximum across these cases to find the optimal strategy. 
We call this algorithm {\CustomLemmaThree} and we provide full pseudocode in Alg.~\ref{alg:poly_solver}. The algorithm has
$\mathcal{O}(n \log n)$ running time.

\begin{algorithm}[ht]
\caption{\CustomLemmaThree}
\label{alg:poly_solver}
\begin{algorithmic}[1]
\State \textbf{Input:} Input inner opinions $\mathbf{s} \in (0,1]^n$, local bounds $\boldsymbol{\rho} \in [0,1]^n$, noise budget $\gamma \ge 0$
\State \textbf{Output:} Optimal target influence $\mathbf{q}^*$
\State Sort nodes such that $s_1 \ge s_2 \ge \dots \ge s_n$
\State Initialize $\publicnoisy_{\text{max}} \gets -\infty$, $\mathbf{q}^* \gets \mathbf{0}$
\State Initialize optimal strategy trackers: $\text{is\_equalization} \gets \text{False}$, $\ell^* \gets 0$, $k^* \gets 0$, $S^* \gets 0$

\State {\textbf{Case 1: Evaluate Full-Reliance Strategy}}
\For{$\ell = 1, \dots, n$}
    \State $\publicnoisy \gets s_\ell (1 - \rho_\ell)$
    \If{$\publicnoisy > \publicnoisy_{\text{max}}$}
        \State $\publicnoisy_{\text{max}} \gets \publicnoisy$
        \State $\ell^* \gets \ell$
        \State $\text{is\_equalization} \gets \text{False}$
    \EndIf
\EndFor

\State {\textbf{Case 2: Evaluate Equalization Strategy}}
\State Initialize running sum $S_{\text{inv}} \gets \sum_{j=1}^{\lfloor\gamma\rfloor} \frac{1}{s_j}$
\For{$k = \lfloor\gamma\rfloor + 1, \dots, n$}
    \State $S_{\text{inv}} \gets S_{\text{inv}} + \frac{1}{s_k}$
    \State $\publicnoisy \gets \frac{k - \gamma}{S_{\text{inv}}}$
    \If{$\publicnoisy > \publicnoisy_{\text{max}}$}
        \State $\publicnoisy_{\text{max}} \gets \publicnoisy$
        \State $k^* \gets k$, \quad $S^* \gets S_{\text{inv}}$
        \State $\text{is\_equalization} \gets \text{True}$
    \EndIf
\EndFor

\State {\textbf{Find Optimal Target Influence Vector}}
\If{$\text{is\_equalization}$}
    \For{$i = 1, \dots, k^*$}
        \State $q_i^* \gets \frac{1}{s_i \cdot S^*}$
    \EndFor
\Else
    \State $q_{\ell^*}^* \gets 1$
\EndIf

\State \Return $\mathbf{q}^*$
\end{algorithmic}
\end{algorithm}

This result 
gives a clear characterization of the optimal target influence $\influencetarg$ under polyhedral uncertainty. 
\emph{Case 1} 
mirrors the nominal solution, by placing all influence ($q_{\ell}^* = 1$) on a single node. 
This is optimal only when a node 
is exceptionally robust with  a low local bound $\rho_{\ell}$, and  
possesses a high input inner opinion $s_{\ell}$. 
If no such node exists, the optimal choice shifts to \emph{Case 2}. 
Here, we distribute influence among the top-$k$ nodes with the highest $s_{i}$ values, such that the \textit{contributions} $c_i = q_i^* s_{i}$ of these nodes to the public opinion are equalized. 
In the worst case, $\gamma$ of this cumulative contribution is eliminated, yielding the optimal public opinion.

\begin{lemma}[Robust Solution for Uncorrelated Ellipsoidal Uncertainty]
\label{lemma:ellip_closed_form}
For the uncorrelated ellipsoidal uncertainty set $\mathcal{U}_s^{\text{ellip}}$ with a diagonal matrix ${\Sigma}$ and noise budget $\omega$, the optimal target influence $\influencetarg$ is defined by distributing influence to the top-$k$ nodes with the highest input inner opinions $s_{i}$ such that:
\begin{equation}
    q_i^* = 
    \begin{cases}
        \frac{s_{i} - \lambda_k}{s_{i}^2 \sigma_i^2}  \left(\sum_{j=1}^k \frac{s_{j} - \lambda_k}{s_{j}^2 \sigma_j^2} \right)^{-1} & \text{if } i \le k \\
        0 & \text{if } i > k
    \end{cases}
\label{eq:closed_ellip}    
\end{equation}
where: 
\begin{equation}
\label{eq:lambda_root}
    \lambda_k = \frac{B_k - \sqrt{B_k^2 - A_k C_k}}{A_k}
\end{equation}
and $A_k = \sum_{i=1}^k \frac{1}{(s_{i} \sigma_i)^2}$, $B_k = \sum_{i=1}^k \frac{1}{s_{i} \sigma_i^2}$, and $C_k = \sum_{i=1}^k \frac{1}{\sigma_i^2} - \omega^2$. 

\noindent The optimal $k$ and $\lambda_k$ satisfy:
\begin{equation}
\label{eq:ellip_cutoff}
    s_{k} > \lambda_k \ge s_{k+1}.
\end{equation}
\end{lemma}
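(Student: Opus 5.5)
Starting from the SOCP of Lemma~\ref{lemma:reform} with diagonal $\Sigma$, I eliminate $t$ and work with the concave problem
\[
\max_{\influence \in \Delta} \; f(\influence) = \sum_i q_i s_i - \omega \Big(\sum_i \sigma_i^2 s_i^2 q_i^2\Big)^{1/2},
\]
where $\Delta$ is the probability simplex. The objective is concave and the constraints are linear, so the KKT conditions are necessary and sufficient, and I will derive the closed form from them. Write $N(\influence) = (\sum_i \sigma_i^2 s_i^2 q_i^2)^{1/2}$. Introduce a multiplier $\mu$ for $\sum_i q_i = 1$ and multipliers $\nu_i \ge 0$ for $q_i \ge 0$. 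Stationarity then reads
\[
s_i - \omega \sigma_i^2 s_i^2 q_i / N - \mu + \nu_i = 0 .
\]
For $q_i > 0$ this gives $q_i = \frac{N}{\omega}\cdot \frac{s_i - \mu}{s_i^2\sigma_i^2}$. For $q_i = 0$ it requires $s_i \le \mu$. Setting $\lambda := \mu$, the support is exactly $\{i : s_i > \lambda\}$, which is a top-$k$ set in the sorted order. Normalizing by $\sum_i q_i = 1$ removes the unknown scale $N/\omega$ and yields formula~\eqref{eq:closed_ellip}, and the cutoff condition~\eqref{eq:ellip_cutoff} is just the complementary slackness statement $s_k > \lambda \ge s_{k+1}$. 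The degenerate case $N = 0$ cannot occur when $s_i > 0$, so I set it aside.

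Next I determine $\lambda$ by substituting the form of $q_i$ back into the definition of $N$. On the support, $\sigma_i^2 s_i^2 q_i^2 = (N/\omega)^2 (s_i - \lambda)^2 / (s_i^2 \sigma_i^2)$. Summing, the identity $N^2 = \sum_i \sigma_i^2 s_i^2 q_i^2$ becomes
\[
\omega^2 = \sum_{i\le k} \frac{(s_i-\lambda)^2}{s_i^2\sigma_i^2} = C_k' - 2\lambda B_k + \lambda^2 A_k ,
\]
where $C_k' = \sum_{i \le k} 1/\sigma_i^2$. This is the quadratic $A_k\lambda^2 - 2B_k\lambda + C_k = 0$ with the paper's $A_k, B_k, C_k$, and its roots are $(B_k \pm \sqrt{B_k^2 - A_kC_k})/A_k$.

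The main obstacle is selecting the root, and also showing that a valid $k$ exists with real roots. To choose the root, note that $\lambda < s_i$ is required on the whole support, so $\lambda < s_k \le \min_{i\le k} s_i$. By Cauchy--Schwarz, $B_k/A_k$ is a weighted average of $s_i$ over $i \le k$ with weights $1/(s_i^2\sigma_i^2)$, so $B_k/A_k \ge s_k$. The larger root is at least $B_k/A_k$ and is therefore infeasible, which forces the minus sign in~\eqref{eq:lambda_root}. For existence of a valid $k$, I will argue from the existence of a KKT point rather than directly. The simplex is compact and $f$ is continuous, so a maximizer exists. Any maximizer satisfies KKT, and its support is a top-$k$ set. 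That $k$ must then satisfy the quadratic with a real root below $s_k$, and the cutoff inequality. The discriminant $B_k^2 \ge A_k C_k$ holds automatically because a real $\lambda$ solves the equation. If ties among the $s_i$ occur, I would break them consistently, or allow $q_i = 0$ when $s_i = \lambda$.

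Finally, to justify the $\mathcal{O}(n\log n)$ claim, I would scan $k$ upward while maintaining prefix sums $A_k, B_k, C_k$, and stop at the first $k$ satisfying~\eqref{eq:ellip_cutoff}. Concavity makes the KKT point the global optimum, so this $k$ is the optimal one.
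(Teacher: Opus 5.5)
Your proposal is correct and follows essentially the paper's own argument: KKT conditions for the concave problem, complementary slackness giving a top-$k$ support with $s_k > \lambda \ge s_{k+1}$, normalization to remove the factor $N/\omega$, and squaring and summing the active conditions to obtain $A_k\lambda^2 - 2B_k\lambda + C_k = 0$. Your root selection via $\lambda < s_k \le B_k/A_k$ is the same inequality the paper gets from $B_k - \lambda A_k = \omega/N > 0$; it holds simply because $B_k/A_k$ is a weighted average of $s_1,\dots,s_k$, so Cauchy--Schwarz is not needed. Your compactness and concavity remarks usefully make explicit why an optimal $k$ exists and why the scan may stop at the first $k$ passing the cutoff test. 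Like the paper, though, you tacitly assume $\omega > 0$ when dividing by $\omega$; at $\omega = 0$ the strict inequality $s_k > \lambda_k$ cannot hold.
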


Lemma \ref{lemma:ellip_closed_form} 
suggests the following algorithm
for computing the optimal influence $\influencetarg$. 
First, we sort the input inner opinions $\inner$ in descending order. 
Next, we compute the constants $A_k$, $B_k$, $C_k$ incrementally as running sums and evaluate $\lambda_k$ from Eq.~\eqref{eq:lambda_root}, while checking for the termination condition from Eq.~\eqref{eq:ellip_cutoff} in a single $\mathcal{O}(n)$ sweep. This results in $\bigO(n \log n)$ running time. 
We call this algorithm {\CustomLemmaFour} and we provide full pseudocode in Alg.~\ref{alg:ellip_solver}. 

\begin{algorithm}[ht]
\caption{\CustomLemmaFour}
\label{alg:ellip_solver}
\begin{algorithmic}[1]
\State \textbf{Input:} Input inner opinions $\mathbf{s} \in (0,1]^n$, variances $\boldsymbol{\sigma}^2 \in \mathbb{R}_{>0}^n$, noise budget $\omega \ge 0$
\State \textbf{Output:} Optimal target influence $\mathbf{q}^*$
\State Sort nodes such that $s_1 \ge s_2 \ge \dots \ge s_n$

\If{$\omega = 0$}
    \State $\mathbf{q}^* \gets \mathbf{0}$, \quad $q_1^* \gets 1$ 
    \State \Return $\mathbf{q}^*$
\EndIf

\State Define boundary condition $s_{n+1} \gets -\infty$
\State Initialize running sums $A \gets 0$, $B \gets 0$, $C \gets -\omega^2$

\For{$k = 1, \dots, n$}
    \State $A \gets A + \frac{1}{(s_k \sigma_k)^2}$
    \State $B \gets B + \frac{1}{s_k \sigma_k^2}$
    \State $C \gets C + \frac{1}{\sigma_k^2}$
    
    \State $\lambda \gets \frac{B - \sqrt{B^2 - AC}}{A}$
    
    \If{$s_k > \lambda \ge s_{k+1}$}
        \State $\mathbf{q}^* \gets \mathbf{0}$
        \State $N_{\text{factor}} \gets B - \lambda A$ 
        \For{$i = 1, \dots, k$}
            \State $q_i^* \gets \frac{s_i - \lambda}{s_i^2 \sigma_i^2} \cdot \frac{1}{N_{\text{factor}}}$
        \EndFor
        \State \Return $\mathbf{q}^*$
    \EndIf
\EndFor
\end{algorithmic}
\end{algorithm}

This result from Lemma~\ref{lemma:ellip_closed_form} is similar to the Equalization strategy from Lemma \ref{lm:poly_hetero_binary}:
the optimal influence is distributed among the top-$k$ nodes with the highest $s_{i}$ values.
However, instead of equalizing the contributions $c_i$, ellipsoidal uncertainty forces the equalization of the terms: ${q_i^* (s_{i} \sigma_i)^2} / ({s_{i} - \lambda_k})$. 
Now, a node receives influence proportional to how far its input inner opinion $s_{i}$ exceeds the threshold $\lambda_k$, penalized by its local uncertainty $\sigma_i^2$.

Detailed proofs for Lemmas \ref{lm:poly_hetero_binary}, \ref{lemma:ellip_closed_form} are provided in Appendix \ref{app:Closed Form Algs}. 
For correlated ellipsoidal uncertainty, where matrix $\Sigma$ is non-diagonal, the optimal target influence $\influencetarg$ is obtained by directly solving the SOCP.
Modern interior-point conic solvers efficiently handle large-scale sparse SOCPs, making this step computationally tractable even for large real-world networks.
By applying these methods, we efficiently solve Step 1 and find the target influence $\influencetarg$.

\subsection{Step 2: Achieving the Target Influence 
{\influencetarg}}

Having found the target influence $\influencetarg$ in Step 1, the remaining challenge is to find the stubbornness vector $\stubbotarg \in [0,1]^n$ that achieves it. 
Because the influence distribution $\mathbf{q}(\mathbf{a})$ is a complex, non-convex function of the stubbornness vector {\stubbo}, as shown by \cite{ChanCorrection}, solving {Step 2} via standard optimization is computationally hard.
Here, we establish a novel theoretical result that finds $\stubbotarg$ in linear time.

\begin{theorem}[Target Influence Realization]
\label{thm:exact_realization}
Given a strongly connected graph $G$ with a row-stochastic interaction matrix $W$, for any target influence distribution $\influencetarg$, there always exists a stubbornness vector $\stubbotarg \in [0,1]^n$ such that $\influence(\stubbotarg) = \influencetarg$. The stubbornness values $a_i^*$ can be constructed as follows:
\begin{equation}
\label{eq:optimal_stubbornness}
    a_i^* = 
        \frac{q_i^*}{y_{0,i} + \eta \pi_i},
\end{equation}
where
\squishlist
    \item $\station$ is the stationary distribution of the random walk on graph $G$, with transition matrix $W$.
    \item $\mathbf{y}_0$ is any base solution vector to the sparse linear system: 
    \begin{equation}
    (I - W^T)\mathbf{y} = \frac{1}{n}\mathbf{1} - W^T\influencetarg,
    \end{equation}
    \item $\eta$ is a scalar shift required to bound the stubbornness values $a_i^*$ within $[0,1]$, with:
    \begin{equation}
        \eta \geq \eta_{\text{min}} = \max_{i : q_i^* > 0} \left( \frac{q_i^* - y_{0,i}}{\pi_i} \right).
    \end{equation}
\squishend
\end{theorem}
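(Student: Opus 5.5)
The plan is to turn the nonlinear map $\stubbo \mapsto \influence(\stubbo)$ into a linear condition on an auxiliary vector, and then read off $\stubbo$ coordinate by coordinate. Write $M(\stubbo) = I-(I-A)W$ and define the row vector $\mathbf{x}^T = \frac1n \mathbf{1}^T M(\stubbo)^{-1}$. Then, by the definition of $q_i$ and Eq.~\eqref{eq:steady-state}, $\influence^T = \frac1n\mathbf{1}^T Q = \mathbf{x}^T A$, that is, $q_i = a_i x_i$.

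\textbf{Step 1: a linear equation for $\mathbf{x}$.} Transposing $\mathbf{x}^T M(\stubbo) = \frac1n \mathbf{1}^T$ gives $\mathbf{x} - W^T(I-A)\mathbf{x} = \frac1n\mathbf{1}$. Substituting $(I-A)\mathbf{x} = \mathbf{x} - \influence$ yields
\[
(I-W^T)\mathbf{x} = \tfrac1n\mathbf{1} - W^T\influence .
\]
This is exactly the sparse system of the theorem, with $\influence=\influencetarg$, and it no longer involves $\stubbo$. I would therefore reverse the logic: solve this system for $\mathbf{x}$ given $\influencetarg$, then set $a_i^* = q_i^*/x_i$.

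\textbf{Step 2: solvability and the solution set.} Since $G$ is strongly connected, $W$ is irreducible. By Perron--Frobenius, $\ker(I-W)=\mathrm{span}(\mathbf{1})$ and $\ker(I-W^T)=\mathrm{span}(\station)$ with $\station>0$ entrywise. The range of $I-W^T$ is the orthogonal complement of $\ker(I-W)$. The right-hand side lies in it, because
\[
\mathbf{1}^T\bigl(\tfrac1n\mathbf{1}-W^T\influencetarg\bigr) = 1-(W\mathbf{1})^T\influencetarg = 1-\mathbf{1}^T\influencetarg = 0 .
\]
Hence a base solution $\mathbf{y}_0$ exists, and every solution has the form $\mathbf{y}_0+\eta\station$.

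\textbf{Step 3: choosing $\eta$.} Let $x_i = y_{0,i}+\eta\pi_i$. The requirement $a_i^*\in[0,1]$ for nodes with $q_i^*>0$ means $x_i\ge q_i^*>0$, which is equivalent to $\eta\ge (q_i^*-y_{0,i})/\pi_i$; this gives $\eta_{\min}$. For nodes with $q_i^*=0$ we set $a_i^*=0$, but we need $x_i\ne 0$ so that the formula is well defined. Since $\station>0$, we can take $\eta$ large enough, if necessary slightly above $\eta_{\min}$, so that all $x_i>0$. I expect this point to need care in the write-up: the theorem states only $\eta\ge\eta_{\min}$, so I would add the positivity requirement at zero-influence nodes, or argue that those nodes simply receive $a_i^*=0$ regardless of $x_i$.

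\textbf{Step 4: verification, the main obstacle.} The remaining issue is to show that this $\stubbotarg$ actually produces $\influencetarg$. Since $\sum_i q_i^*=1$, some $a_i^*>0$. Together with irreducibility of $W$, this makes $(I-A)W$ substochastic with spectral radius $<1$, so $M(\stubbotarg)$ is invertible and the FJ dynamics converge, as stated in Section~3. Therefore $\mathbf{x}(\stubbotarg)$ is the \emph{unique} solution of $\mathbf{x}^T M(\stubbotarg)=\frac1n\mathbf{1}^T$. Our vector $\mathbf{x}=\mathbf{y}_0+\eta\station$ satisfies this equation: reversing the Step 1 computation with $A\mathbf{x}=\influencetarg$ (which holds by construction, also at zero-influence nodes) gives $\mathbf{x}-W^T\mathbf{x}+W^T\influencetarg=\frac1n\mathbf{1}$. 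By uniqueness, $\influence(\stubbotarg)=A\mathbf{x}=\influencetarg$.

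The delicate parts are this uniqueness/invertibility argument and the positivity of the denominators. The overall construction is linear time apart from solving one sparse linear system and computing $\station$.
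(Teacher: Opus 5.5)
Your proof is correct. Your Steps 1--3 coincide with the paper's constructive Part~2: the auxiliary vector with $q_i=a_iy_i$, elimination of $A$, the Fredholm consistency check, the family $\mathbf{y}_0+\eta\boldsymbol{\pi}$, and $\eta_{\min}$.

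You differ from the paper in how existence is established.
\begin{itemize}
\item The paper proves surjectivity of $\mathbf{a}\mapsto\mathbf{q}(\mathbf{a})$ separately (Part~1). It restricts $\mathbf{a}$ to the probability simplex and invokes Jamison's lemma that a continuous, face-preserving self-map of the simplex is onto.
\item Its Part~2 runs only in the necessity direction: assume some $\mathbf{a}$ realizes $\mathbf{q}^*$, then deduce the linear system. It never checks that the vector it reads off, for the chosen $\eta$, actually returns $\mathbf{q}^*$.
\item Your Step~4 replaces the topological lemma with a direct sufficiency check, using invertibility of $M(\mathbf{a}^*)$ and uniqueness of $\mathbf{x}$.
\end{itemize}

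Your route is more elementary and proves more.
\begin{itemize}
\item It shows that \emph{every} $\eta\ge\eta_{\min}$ yields a realization. This is what the theorem asserts, and what the one-dimensional search over $\eta$ for the budgeted problem relies on.
\item Part~1 plus necessity only shows that \emph{some} $\eta\ge\eta_{\min}$ works. It does not justify the specific choice $\eta=\eta_{\min}$ made at the end of the paper's proof.
\item Combined with Step~1, your argument shows that this one-parameter family is exactly the set of realizing stubbornness vectors.
\end{itemize}

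What the paper's route buys in addition is a realizing vector with $\sum_i a_i=1$. The theorem does not need this. It is also recoverable from your family, since $\eta\mapsto\sum_i a_i^*(\eta)$ is continuous on $[\eta_{\min},\infty)$, is at least $1$ at $\eta_{\min}$, and tends to $0$.

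On your concern about inactive nodes, take your second option and do not enlarge $\eta$. Define $a_i^*=0$ there directly and run Step~4. That step gives $\mathbf{y}_0+\eta\boldsymbol{\pi}=\mathbf{x}(\mathbf{a}^*)$. Since $M(\mathbf{a}^*)^{-1}=\sum_{k\ge 0}\bigl((I-A^*)W\bigr)^k\ge I$ entrywise, every entry of $\mathbf{x}(\mathbf{a}^*)$ is at least $1/n$. So every denominator $y_{0,i}+\eta\pi_i$ is positive, and the formula in the theorem is well defined exactly as stated, for all $\eta\ge\eta_{\min}$.
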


A detailed proof for Theorem \ref{thm:exact_realization} is provided in Appendix \ref{app:Feasibility}. 
Crucially, this theorem provides a scalable algorithm to compute the stubbornness $\stubbotarg$ required to achieve any target influence $\influencetarg$. 
First, computing the stationary distribution $\station$ of the interaction matrix $W$ via power iterations requires $\mathcal{O}(|E|)$ time. 
Next, finding vector $\mathbf{y}_0$ involves solving the sparse linear system $(I - W^T)\mathbf{y} = \frac{1}{n}\mathbf{1} - W^T\influencetarg$. 
Using the LSMR algorithm \cite{fong2011lsmr} (see Appendix \ref{app:LSMR}), this step also requires $\mathcal{O}(|E|)$ time per iteration. 
Finally, determining the range of $\eta$ by calculating $\eta_{\text{min}}$ requires $\mathcal{O}(n)$ time. 
Consequently, the execution of Step 2 is dominated by sparse matrix-vector multiplications, achieving a linear time complexity with respect to the number of nodes and edges.

\spara{Computational Complexity:} 
Summarizing,  
in Step 1, finding {\influencetarg} for polyhedral and uncorrelated ellipsoidal uncertainty requires $\mathcal{O}(n \log n)$ time 
(Lemmas~\ref{lm:poly_hetero_binary} and \ref{lemma:ellip_closed_form}). 
For correlated ellipsoidal uncertainty, we solve the SOCP (Lemma \ref{lemma:reform}).
In the worst case this  requires
 $\mathcal{O}(n^3)$ time \cite{Boyd2004}. However, in practice the actual running time is much faster due to the sparsity of matrix $\Sigma$ 
(see Sec.~\ref{sec:experiments}). 
Step 2 has ${\mathcal{O}}(|E|)$ complexity.
Overall, the two-step solution for \ProblemShort scales efficiently, making it applicable for massive  networks.

\section{The Budgeted {\ProblemShort} Problem}
\label{sec:budget}
In the previous section, we gave efficient and theoretically-sound solutions to the \ProblemShort problem. 
However, the lack of constraints on the interventions we perform leads to solutions that have a specific structure, with the stubbornness of a percentage of nodes becoming zero. 
While this is a valid theoretical result, these interventions may be extreme in certain situations.
To address this, we build on our two-step approach to introduce the budgeted version of \ProblemShort, which we call \ProblemBudget.


\begin{problem}[{\ProblemBudget}]
\label{problem:budgeted_romsa}
Assume a directed graph $G = (V, E)$ with interaction matrix $W$, and a vector of input inner opinions $\inner \in [0,1]^n$. Let $\mathcal{U}_s$ be the uncertainty set of the inner opinions. 
Given an initial stubbornness vector $\stubboInit$ and the corresponding initial influence vector \influenceInit, along with an intervention budget $b \in [0,2]$, find a stubbornness vector $\stubbotarg$ such that:
\begin{align}
\label{eq:b_ROMSA}
    \stubbotarg = \arg\max_{\stubbo \in [0,1]^n} \min_{\innernoisy \in \mathcal{U}_s} \publicnoisy(\stubbo, \innernoisy) \quad \text{s.t.} \quad \|\influence(\stubbotarg) - \influenceInit\|_1 \le b \, .
\end{align}
\end{problem}
Note that the intervention budget takes values $b \in [0,2]$ because $\influence$ is a \emph{distribution}, and that for $b=2$, \ProblemBudget becomes \ProblemShort. 
We can handle the {\ProblemBudget} problem within our two-step framework. 
Specifically, we append the constraint $\|\influence - \influenceInit\|_1 \le b$ to the constraints of the LP or SOCP from Lemma \ref{lemma:reform}.
Thus, the results of Lemma~\ref{lemma:reform} still hold for {\ProblemBudget} and 
we can compute the 
optimal influence target {\influencetarg} by solving the corresponding LP or SOCP, with the same time complexity. Step 2 is again solved via Theorem~\ref{thm:exact_realization}, which guarantees that all influence targets $\influencetarg$ can be achieved via adjustments in stubbornness.

To further enforce the notion of the budget for \ProblemBudget, we select the scalar $\eta$ in a specific manner. 
Of the infinitely many valid $\stubbotarg$ vectors that can achieve \influencetarg, we choose the one that minimizes the $L_2$ squared distance from the initial state {\stubboInit}. That is, we compute: $\eta^* = \arg\min_{\eta \ge \eta_{\text{min}}} \|\stubbotarg(\eta) - \stubboInit \|_2^2$. Because this optimization domain is strictly one-dimensional, we can efficiently compute $\eta^*$ in $\mathcal{O}(n)$ time (see Appendix~\ref{app:ETA} for more details).


\section{Experiments}\label{sec:experiments}

In our experiments, we evaluate our solution on real-world networks to address the following research questions (RQ's): 

\begin{enumerate}[label=RQ\arabic*:]
    \item How robust is our solution to worst-case noise?
    \item How well does our proposed solution scale?
    \item How does our robust solution distribute influence?
\end{enumerate}

\spara{Datasets and Hardware:} Table~\ref{tab:datasets_main} summarizes the characteristics of the SNAP datasets \cite{snapnets} we use to construct the adjacency matrices $W$. To ensure stationary Friedkin-Johnsen dynamics, we restrict all networks to their Largest Strongly Connected Component (LSCC). We execute all experiments on a 16GB Apple M4 MacBook Air. Source code for all experiments is available online.\footnote{\url{https://anonymous.4open.science/r/Robust-Opinion-Maximization-26D3}}

\spara{Parameters and Solvers:} We draw input inner opinions $\inner$ and initial stubbornness values $\stubboInit$ uniformly at random from $U(0,1)$. To avoid extreme cases of uncertainty, we draw local noise bounds $\rho_i$ and deviations $\sigma_i$ from $U(0.3, 0.7)$. To ensure statistical significance, we evaluate 10 random initializations of these parameters, reporting the mean and one standard deviation. For Step 1, we solve the LP and SOCP using Gurobi Optimizer's interior-point method. For Step 2, we compute $\mathbf{y}_0$ via the LSMR algorithm (see Appendix~\ref{app:LSMR}).

\spara{Correlated Noise:} To simulate ellipsoidal noise correlations, we construct a block-diagonal covariance matrix $\Sigma_c$ by applying the METIS algorithm \cite{METIS}. 
We specify an average community target size of $15$ nodes, resulting in $\max(1, \lfloor n / 15 \rfloor)$ partitions for each dataset. 
Intra-community noise correlation is set to $\Sigma_c[i,j] = 0.4 \sigma_i \sigma_j$, while inter-community noise is $\Sigma_c[i,j] = 0$.
We denote the diagonal matrix of uncorrelated ellipsoidal noise as $\Sigma_u$.

\begin{figure*}[ht] 
    \centering
    \vspace{-0.2cm} 
    
    \includegraphics[width=0.5\textwidth]{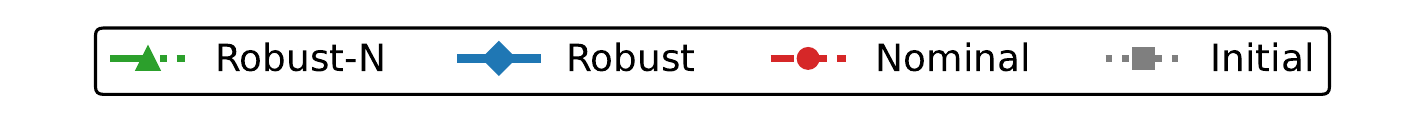}
    \vspace{-0.1cm} 

    \begin{subfigure}[b]{0.24\textwidth}
        \centering
        \includegraphics[width=\textwidth]{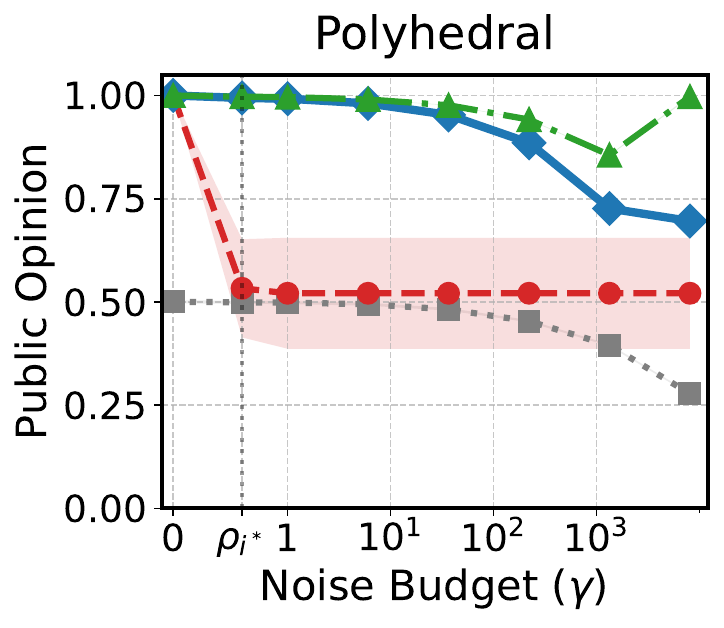}
        \vspace{-0.5cm} 
        \caption{}
        \label{fig:rq1_a}
    \end{subfigure}\hfill
    \begin{subfigure}[b]{0.24\textwidth}
        \centering
        \includegraphics[width=\textwidth]{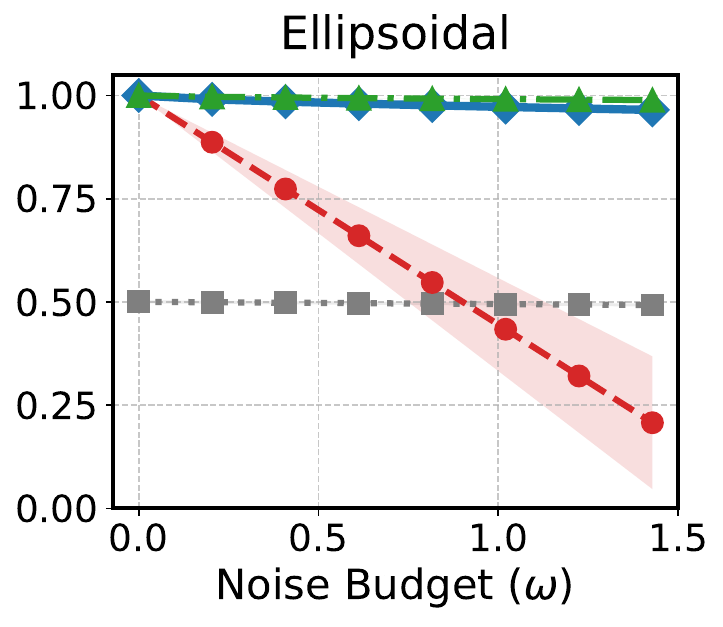}
        \vspace{-0.5cm} 
        \caption{}
        \label{fig:rq1_b}
    \end{subfigure}\hfill
    \begin{subfigure}[b]{0.24\textwidth}
        \centering
        \includegraphics[width=\textwidth]{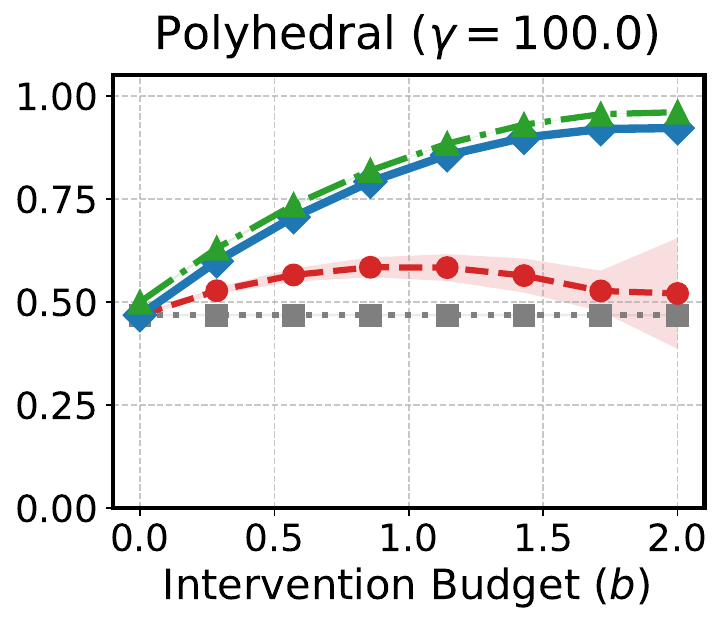}
        \vspace{-0.5cm} 
        \caption{}
        \label{fig:rq1_c}
    \end{subfigure}\hfill
    \begin{subfigure}[b]{0.24\textwidth}
        \centering
        \includegraphics[width=\textwidth]{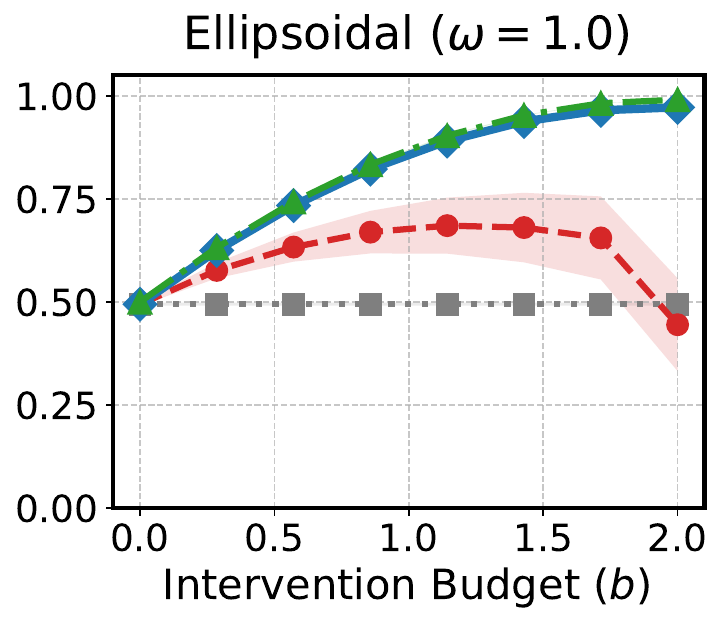}
        \vspace{-0.5cm} 
        \caption{}
        \label{fig:rq1_d}
    \end{subfigure}
    
    \vspace{-0.2cm} 
    \caption{Public opinion with varying (a), (b) noise budget and  (c), (d) intervention budget. }
    \label{fig:rq1_robustness}
    \vspace{-0.4cm} 
\end{figure*}

\begin{table}[ht]
    \centering
    \caption{Summary of network datasets.}
    \label{tab:datasets_main}
    \begin{tabular}{l r r l}
        \toprule
        \textbf{Dataset} & \textbf{Nodes ($n$)} & \textbf{Edges ($|E|$)} & \textbf{Type} \\
        \midrule
        Epinions & 32,223 & 443,506 & Directed \\
        Gowalla & 196,591 & 950,327 & Undirected \\
        Google & 434,818 & 3,419,124 & Directed \\
        Pokec & 1,304,537 & 29,183,655 & Directed \\
        \bottomrule
    \end{tabular}
\end{table}


\subsection{RQ1: How robust is our solution to worst-case noise?}

In this experiment, we evaluate the worst-case public opinion that our algorithms achieve in the presence of noise. 
We evaluate our solution (denoted \textsl{Robust}) against two baselines: \textsl{Nominal} maximizes public opinion assuming no noise ($\mathbf{\hat{s}}=\mathbf{s}$), while \textsl{Initial} applies no interventions to the initial stubbornness \stubboInit. We report experiments on the Epinions dataset. Results for all datasets are in Appendix \ref{app:Experiements_More}.


In Fig.~\ref{fig:rq1_robustness}(a) and Fig.~\ref{fig:rq1_robustness}(b), we show the worst-case public opinion (y-axis) for the {\ProblemShort} problem (no influence budget) when varying noise budgets $\gamma$, and $\omega$ (x-axis). 
Our \textsl{Robust} solution (blue) consistently dominates both the \textsl{Nominal} (red) and \textsl{Initial} (grey) baselines. 
As noise increases, \textsl{Nominal} degrades rapidly,
since it greedily concentrates all influence on the node $i^*$ with the highest inner opinion $s_{i^*}$.
For noise budget $\gamma > \rho_{i^*}$, the public opinion of \textsl{Nominal} becomes $\rho_{i^*}s_{i^*}$, which is occasionally worse than \textsl{Initial}.
Ellipsoidal uncertainty exhibits 
similar trends for all approaches. 

To understand the effect of the worst-case scenario on our solution, we also plot \textsl{Robust-N}, which evaluates our robust solution under zero-noise.
The small gap between \emph{Robust} (blue) and \emph{Robust-N} (green) highlights 
that noise can only marginally reduce the public opinion for our robust solution. 
Also, note that \textsl{Robust-N} is close to the theoretical optimal of 1, indicating that accounting for noise does not severely affect the achieved public opinion.

In Fig.~\ref{fig:rq1_robustness}(c) and Fig.\ref{fig:rq1_robustness}(d), we fix the noise ($\gamma=100$, $\omega=1$) and vary the intervention budget $b$ for the {\ProblemBudget} problem. 
As $b$ increases, the \textsl{Robust} public opinion increases, effectively utilizing the extra budget. 
\textsl{Nominal}, initially increases but eventually degrades, since it allocates the additional budget to a small number of nodes, making it highly susceptible to worst-case attacks. 
Note that \textsl{Initial} is consistently worst, indicating 
that doing nothing leaves the network in a sub-optimal state. 

\subsection{RQ2: How well does our proposed solution scale?}

To address RQ2, Table~\ref{tab:total_runtime} reports the execution time of our approach across networks of increasing size. Notably, the runtimes remain on the order of seconds even for the largest graphs, demonstrating the strong scalability of our framework.

In the top section of Table~\ref{tab:total_runtime}, we isolate Step 1 for \ProblemShort to demonstrate the algorithmic speedup achieved by \CustomLemmaThree and \CustomLemmaFour.
We observe that our algorithms yield a speedup of over two orders of magnitude over the state-of-the-art Gurobi optimizer for solving the LP and SOCP in Step 1.

In the bottom section of Table~\ref{tab:total_runtime}, we evaluate the runtime of our two-step solution for the {\ProblemBudget} problem with an intervention budget of $b=1$ and noise bounds set to $\gamma=100$ and $\omega=1$. 
We observe that our solution still scales exceptionally well, handling the 1.3-million-node {Pokec} graph in just minutes. We report the standard deviations of the running times in Appendix~\ref{app:Experiements_More}.

\begin{table}[b] 
    \centering
    \small 
    
    \caption{Runtime (in seconds). Top: Solving Step 1 of {\ProblemShort}. Bottom: Solving the full {\ProblemBudget} problem ($b=1$).}
    \label{tab:total_runtime}
    
    \begin{tabular*}{\columnwidth}{@{\extracolsep{\fill}} l r r r r @{}} 
        \toprule
        \multirow{3}{*}{\textbf{Dataset}} & \multicolumn{4}{c}{\textbf{{\ProblemShort} Step 1}} \\
        \cmidrule(lr){2-5}
        & \multicolumn{2}{c}{\textbf{Polyhedral}} & \multicolumn{2}{c}{\textbf{Ellipsoidal}} \\
        \cmidrule(lr){2-3} \cmidrule(lr){4-5}
        & Solver & \CustomLemmaThree & Solver & \CustomLemmaFour \\
        \midrule
        Epinions & 0.27 & \textbf{0.0020} & 0.46 & \textbf{0.0021} \\
        Gowalla  & 1.68 & \textbf{0.0136} & 2.92 & \textbf{0.0148} \\
        Google   & 3.92 & \textbf{0.0332} & 6.77 & \textbf{0.0354} \\
        Pokec    & 14.01 & \textbf{0.1117} & 20.01 & \textbf{0.1168} \\
        \bottomrule
    \end{tabular*}
    
    \vspace{0.4cm} 
    
    \begin{tabular*}{\columnwidth}{@{\extracolsep{\fill}} l r r r r r r @{}} 
        \toprule
        \multirow{3}{*}{\textbf{Dataset}} & \multicolumn{6}{c}{\textbf{{\ProblemBudget} ($b=1$)}} \\
        \cmidrule(lr){2-7}
        & \multicolumn{3}{c}{\textbf{Polyhedral}} & \multicolumn{3}{c}{\textbf{Ellipsoidal}} \\
        \cmidrule(lr){2-4} \cmidrule(lr){5-7}
        & Step 1 & Step 2 & \textbf{Total} & Step 1 & Step 2 & \textbf{Total} \\
        \midrule
        Epinions & 1.59 & 4.98 & \textbf{6.57} & 2.70 & 3.78 & \textbf{6.48} \\
        Gowalla  & 17.23 & 33.47 & \textbf{50.70} & 23.01 & 19.65 & \textbf{42.66} \\
        Google   & 77.88 & 130.90 & \textbf{208.78} & 100.57 & 99.65 & \textbf{200.22} \\
        Pokec    & 136.91 & 278.72 & \textbf{415.63} & 263.50 & 229.33 & \textbf{492.83} \\
        \bottomrule
    \end{tabular*}
\end{table}

\begin{figure*}[t]
    \centering
    
    \captionsetup[subfigure]{margin={-1.2cm, 0cm}} 
    
    \begin{subfigure}[b]{0.32\textwidth}
        \centering
        \includegraphics[width=\textwidth]{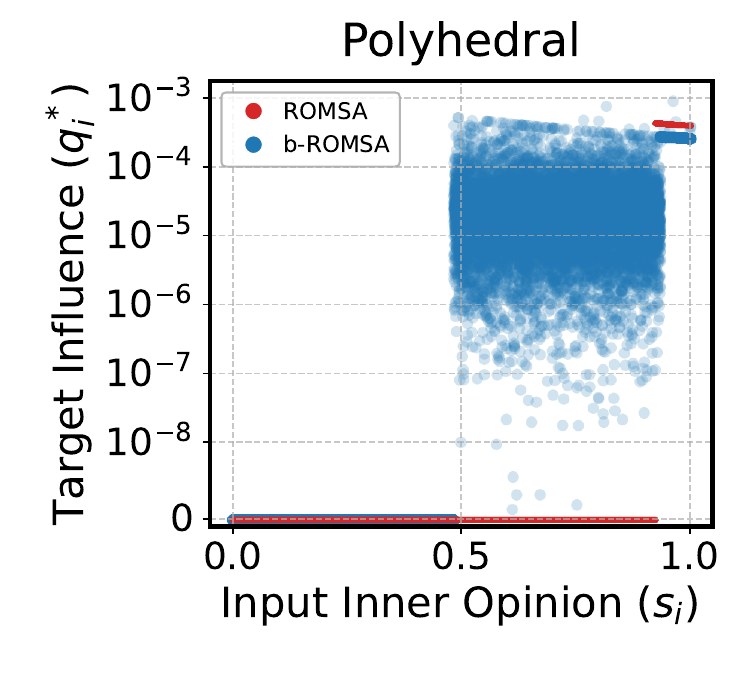}
        \vspace{-0.75cm} 
        \caption{}
        \label{fig:rq3_a}
    \end{subfigure}\hfill
    \begin{subfigure}[b]{0.32\textwidth}
        \centering
        \includegraphics[width=\textwidth]{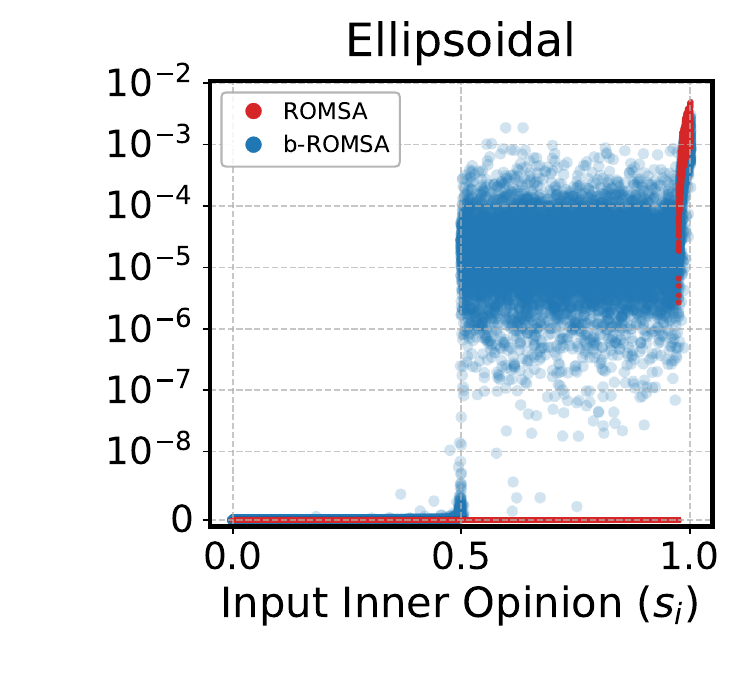}
        \vspace{-0.75cm} 
        \caption{}
        \label{fig:rq3_b}
    \end{subfigure}\hfill
    \begin{subfigure}[b]{0.32\textwidth}
        \centering
        \includegraphics[width=\textwidth]{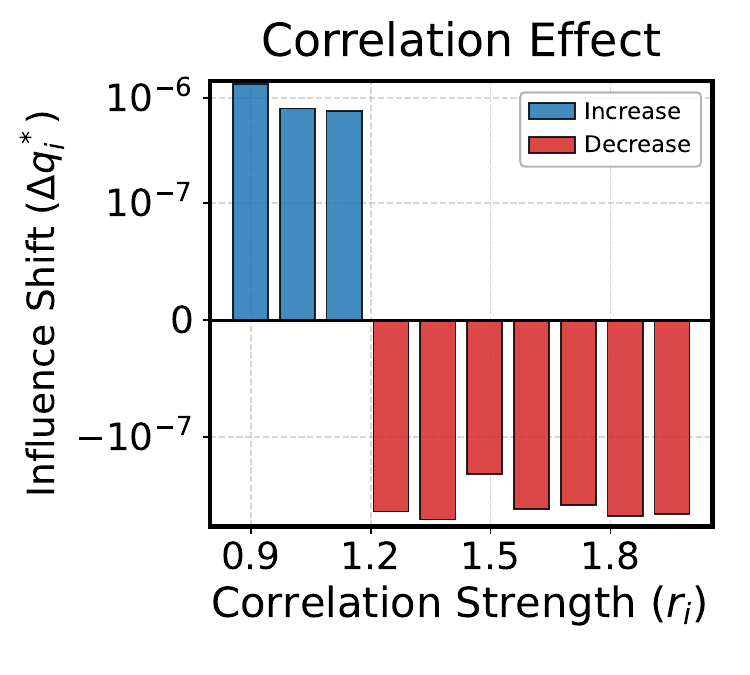} 
        \vspace{-0.75cm} 
        \caption{}
        \label{fig:rq3_c}
    \end{subfigure}
    
    \vspace{-0.2cm} 
    \caption{Scatter plot of $q_i^*,s_i$ for (a)  Polyhedral ($\gamma$ = 100) and (b) Ellipsoidal ($\omega$ = 1)  uncertainty. 
    (c) Influence shift ($\Delta q_i^*$) against correlation strength. All plots are on the Epinions dataset.}
    \label{fig:rq3_distributions}
    \vspace{-0.3cm} 
\end{figure*}

\begin{figure*}[b]
    \centering
    
    \begin{subfigure}[b]{0.24\textwidth}
        \centering
        \includegraphics[width=\textwidth]{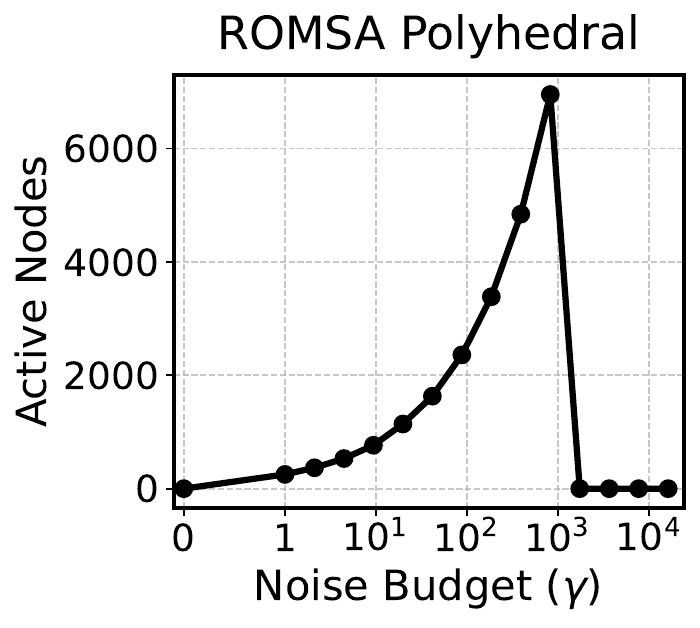}
        \caption{}
        \label{fig:app1_a}
    \end{subfigure}\hfill
    \begin{subfigure}[b]{0.24\textwidth}
        \centering
        \includegraphics[width=\textwidth]{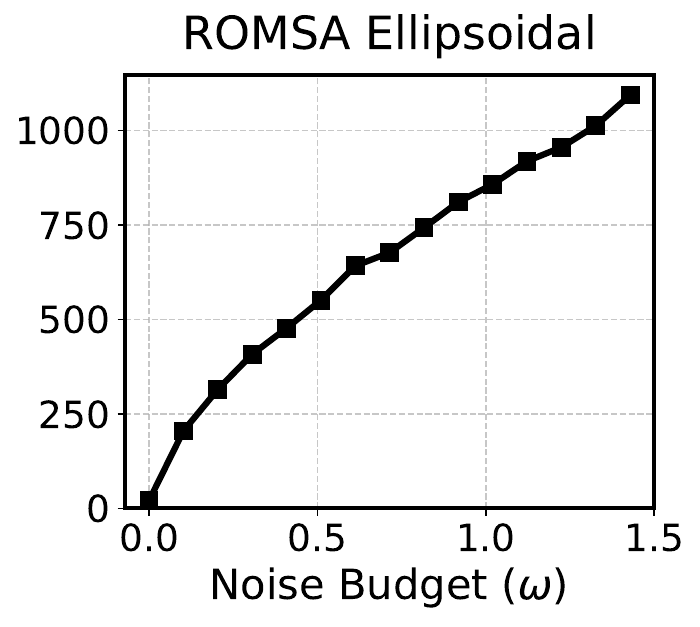}
        \caption{}
        \label{fig:app1_b}
    \end{subfigure}\hfill
    \begin{subfigure}[b]{0.24\textwidth}
        \centering
        \includegraphics[width=\textwidth]{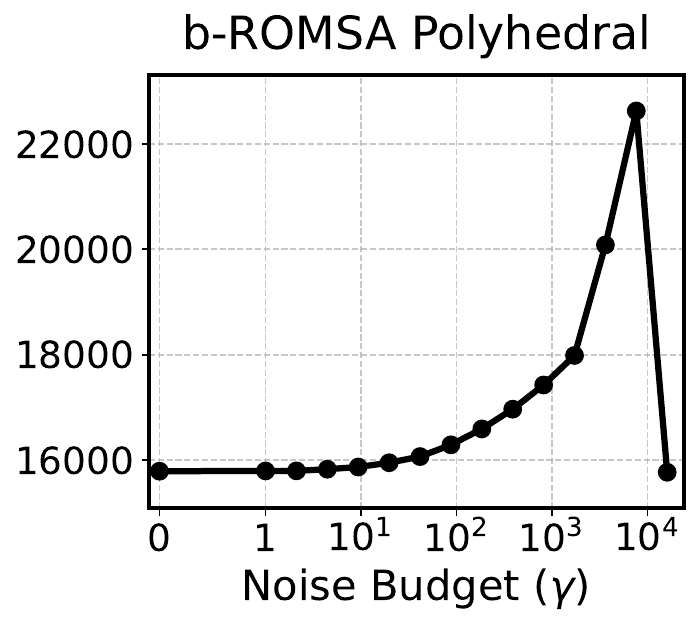}
        \caption{}
        \label{fig:app1_c}
    \end{subfigure}\hfill
    \begin{subfigure}[b]{0.24\textwidth}
        \centering
        \includegraphics[width=\textwidth]{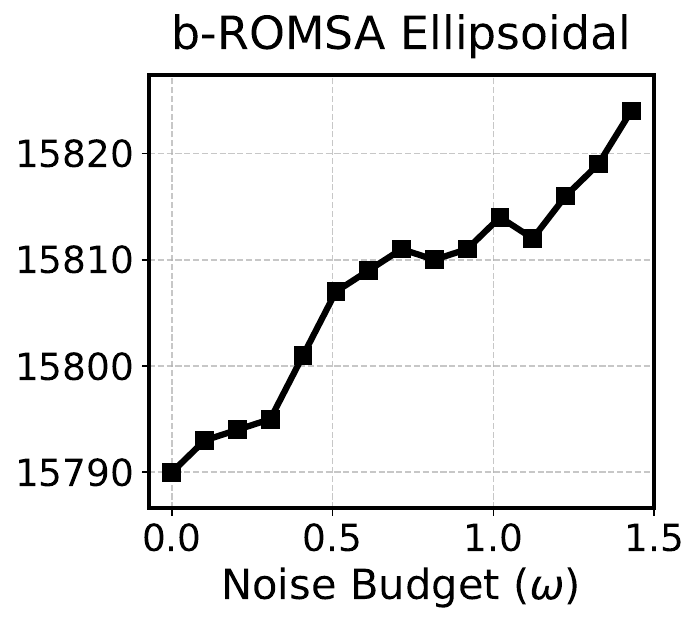}
        \caption{}
        \label{fig:app1_d}
    \end{subfigure}
    
    \caption{Number of active nodes as the noise budget increases. Plots (a), (b) show the results for \ProblemShort, while plots (c), (d) depict the results for \ProblemBudget with an intervention budget $b=1$. All plots are on the Epinions dataset.}
    \label{fig:rq3_diversification_combined}
\end{figure*}

\subsection{RQ3: How does our robust solution distribute influence?}

We now examine how our solution allocates $\influencetarg$. We analyze the structure of the resulting influence distributions, evaluate their sensitivity to correlated noise, and investigate how the number of active nodes ($q_i^* > 0$) shifts under increasing noise budgets.
%

In Fig.~\ref{fig:rq3_distributions}(a) and Fig.\ref{fig:rq3_distributions}(b) we have a scatter plot of the optimal influence values $q_i^*$ against the inner opinions $s_i$ for \ProblemShort (red) and \ProblemBudget (blue). 
For \ProblemShort, the algorithm distributes influence 
to the top $k$ nodes with the highest $s_i$, as determined by Eq.~(\ref{eq:closed_poly}) and Eq~(\ref{eq:closed_ellip}). 
For \ProblemBudget ($b=1$), the solver can no longer concentrate all influence on these top $k$ nodes.
Instead it is forced to distribute influence across a much broader set. 
However, despite this wider allocation, the solver still prioritizes the top-$k$ nodes, 
assigning them high influence $q_i^*$ as in the red distribution.

We then study the effect of the correlation matrix $\Sigma$ in ellipsoidal uncertainty on $\influencetarg$.
We compute  $\influencetarg_u$ with a diagonal correlation matrix $\Sigma_u$ (no correlation), and $\influencetarg_c$, with a  correlation matrix $\Sigma_c$ where nodes in the same community are correlated. We compute the influence shift $\Delta \influencetarg = \influencetarg_c - \influencetarg_u$, which is the effect of correlation.

In Fig.~\ref{fig:rq3_distributions}(c) we plot the influence shift ($\Delta q_i^*$) against the \emph{correlation strength} $r_i = \sum_{j \neq i} \Sigma_c[i,j]$, which measures how much the uncertainty of a node correlates with that of it's peers.
We observe that the optimal solution recognizes highly correlated nodes as vulnerabilities. High noise on these nodes will ``spread'' to 
the entire correlated community. 
To minimize this risk, the optimal solution removes influence from highly correlated nodes (red) and allocates it to more independent, low-correlation nodes (blue). 

We also analyze how the number of \emph{active nodes} ($q_i^* > 0$) for {\influencetarg} changes as the noise budgets $\gamma$ for polyhedral and $\omega$ for ellipsoidal uncertainty increase. We show this in Figure~\ref{fig:rq3_diversification_combined} for \ProblemBudget ($b=1$) vs. \ProblemShort under both polyhedral and ellipsoidal uncertainty.

For polyhedral uncertainty in Figures~\ref{fig:rq3_diversification_combined}a and \ref{fig:rq3_diversification_combined}c, we depict the network's \emph{active nodes} ($q_i^* > 0$).
Notice that \ProblemBudget begins with and maintains a larger number of active nodes compared to \ProblemShort. 
The intervention budget $b$ directly causes this by preventing influence from shifting too far away from $\influenceInit$. 
Despite this, both cases exhibit a distinct "peak-and-drop" pattern. 
As the noise budget $\gamma$ grows, it is optimal to diversify influence across many nodes. 
However, once $\gamma$ crosses a critical threshold, the noise becomes too powerful and diversification fails. 
The optimal solution changes strategy and executes \emph{full-reliance} by distributing all available influence onto the single safest node to control the worst-case damage. 
Interestingly, this collapse occurs in both \ProblemShort and \ProblemBudget.

For ellipsoidal uncertainty in Figures~\ref{fig:rq3_diversification_combined}b and \ref{fig:rq3_diversification_combined}d, we depict the network's \emph{active nodes} ($q_i^* > 0$). 
Here we see a smooth, continuous upward curve as the noise budget $\omega$ increases. 
This reflects the built-in diversification effect of the SOCP solution: to minimize the penalty of a stronger noise, it must spread influence across a wider range of nodes.
However, in \ProblemBudget, the budget constraint $b$ starkly limits the extent of this. 
While the \ProblemShort problem actively scales its active nodes from zero, the \ProblemBudget problem is anchored initially high (starting at 15,790 active nodes) and only activates a few dozen additional nodes (peaking at 15,825) as $\omega$ increases.

\section{Conclusion}
\label{sec:conclusion}

In this paper, we introduced the first robust optimization framework for opinion-formation dynamics, addressing the critical challenge of handling uncertainty in social networks. By leveraging classic robust optimization techniques, we formulated and solved the {\ProblemShort} and {\ProblemBudget} problems under the Friedkin-Johnsen model using a highly scalable, two-step approach capable of processing million-node graphs in minutes.

Our work establishes a foundation that opens new research directions for robust network interventions. While we focused on mitigating uncertain inner opinions via stubbornness adjustments, these parameters can be altered to facilitate different strategies, such as modifying graph topology against uncertainty in user stubbornness. Alternative objectives, like minimizing polarization, can also be introduced. Future work can explore different intervention budgets and constraint structures. Finally, while we achieved optimal $\mathcal{O}(n \log n)$ speedups for the unbudgeted cases, deriving comparable algorithmic speedups for correlated ellipsoidal noise and under budget constraints remains an open challenge.

\bibliographystyle{ACM-Reference-Format}
\bibliography{robustness-bibliography}


\appendix

\section{Constructing a Boundary-Respecting Ellipsoidal Uncertainty Set}
\label{appendix:covariance_setup}

In this section, we derive the condition required to guarantee that worst-case noise {\noise} results in valid noisy opinions {\innernoisy}.
Recall the definition of the ellipsoidal uncertainty set from Equation \eqref{noise_elips}:
\begin{equation*}
    \mathcal{U}_s^{\text{ellip}} = \left\{ \mathbf{\hat{s}} = \mathbf{s} \circ (\mathbf{1} - \boldsymbol{\delta}) \; \middle| \; \boldsymbol{\delta}^T \Sigma^{-1} \boldsymbol{\delta} \le \omega^2 \right\}.
\end{equation*}

Since we are solving a robust maximization problem over the FJ noisy public opinion $\publicnoisy = {\influence}^T {\innernoisy}$, the worst-case noise corresponds to the maximum possible reduction of the input inner opinions {$s_i$}. 
Because the input inner opinions $s_{i}$ are non-negative, the maximum valid reduction occurs when an inner opinion is reduced to zero, which corresponds to a noise of $\delta_i = 1$. 
Thus, we must enforce an upper bound of $\delta_i \le 1$ for all valid noises. 

Next, we evaluate the maximum noise possible under the defined ellipsoidal uncertainty set. We seek the maximum value that the $i$-th component $\delta_i$ can reach within the set $\mathcal{U}_s^{\text{ellip}}$. This is determined solely by the ellipsoidal constraint on $\boldsymbol{\delta}$:
\begin{equation*}
    \max_{\boldsymbol{\delta}} \; \delta_i \quad \text{subject to} \quad \boldsymbol{\delta}^T \Sigma^{-1} \boldsymbol{\delta} \le \omega^2.
\end{equation*}
This optimization problem corresponds to finding the maximum projection of the ellipsoid defined by $\Sigma$ and $\omega$ onto the $i$-th coordinate axis. Using standard results for the support function of an ellipsoid centered at the origin \cite{Boyd2004}, this maximum value evaluates to:
\begin{equation*}
    \max \delta_i = \omega \sqrt{\Sigma[{i,i}]}.
\end{equation*}

Crucially, this maximum projection depends \textit{only} on the diagonal element $\Sigma[{i,i}]$ and the budget $\omega$. The off-diagonal elements $\Sigma[{i,j}]$ (correlations) affect the spatial orientation of the ellipsoid but cannot extend its absolute projection along any single axis.

To ensure feasibility for all possible realizations of worst-case noise, we require that this maximum possible deviation does not exceed the limit of $1$:
\begin{equation*}
    \omega \sqrt{\Sigma[{i,i}]} \le 1.
\end{equation*}

Recalling our definition of the covariance diagonal entries as $\Sigma[{i,i}] = \sigma_i^2$, we can rewrite this local condition as $\omega \sigma_i \le 1$. To guarantee that this condition holds simultaneously for all nodes, the global budget parameter $\omega$ must be constrained by the maximum noise variance present in the network. Consequently, we arrive at the strict global bound:
\begin{equation*}
    \omega \le \frac{1}{\max_i \sigma_i}.
\end{equation*}

By constructing $\Sigma$ and selecting $\omega$ within this range, we ensure that the maximum possible noise respects the lower bound ($\hat{s}_i \ge 0$). 

\spara{The Ellipsoidal Relaxation:} 
It is important to note that the standard ellipsoid $\boldsymbol{\delta}^T \Sigma^{-1} \boldsymbol{\delta} \le \omega^2$ is symmetric around the origin, meaning it permits negative noise values ($\delta_i < 0$). 
A negative noise value would increase an input inner opinion ($\hat{s}_i > s_i$). However, because the inner minimization strictly wants to reduce the objective, it will always choose positive noise values ($\delta_i \ge 0$). 
Therefore, allowing the uncertainty set to extend into the negative $\noise$ space acts as a harmless geometric relaxation. 
We utilize this relaxation because explicitly appending non-negativity box constraints ($\delta_i \in [0,1]$) to the definition of the ellipsoidal uncertainty set would create the intersection of an ellipsoid and a hypercube. 
Because this intersection is not a pure ellipsoid, it breaks the strict geometric properties required for standard robust optimization \cite{bental1998robust}. 
By enforcing our global bound on $\omega$ instead, we guarantee both physical validity and computational tractability.

\section{Solving the Inner Minimization Problem of ROMSA}
\label{appendix:inner_min}

In this section, we derive the solutions for the inner minimization problem when the stubbornness vector $\mathbf{a}$ (and thus the influence vector $\mathbf{q}(\mathbf{a})$) is fixed.

Given a fixed $\mathbf{a}$, we seek to minimize the \emph{noisy public opinion} $\publicnoisy$. Using the multiplicative noise model $\hat{s_i} = s_{i}(1 - \delta_i)$, the objective becomes:
\begin{equation*}
    \min_{\boldsymbol{\delta}} \sum_{i=1}^n q_i(\mathbf{a}) s_{i} (1 - \delta_i) = \sum_{i=1}^n q_i(\mathbf{a}) s_{i} - \max_{\boldsymbol{\delta}} \sum_{i=1}^n q_i(\mathbf{a}) s_{i} \delta_i
\end{equation*}
Let $c_i = q_i(\mathbf{a}) s_{i} \ge 0$ denote the \emph{contribution} of node $i$. The worst-case noise can be found by solving the maximization problem:
\begin{equation}
\label{eq:adv_objective}
    \max_{\boldsymbol{\delta}} \mathbf{c}^T \boldsymbol{\delta}
\end{equation}

\spara{Polyhedral Set Strategy (Continuous Knapsack):}
For the polyhedral set, the constraints on the noise is: $0 \le \delta_i \le \rho_i$ and $\sum_{i=1}^n \delta_i \le \gamma$. 
The problem in Eq.~\eqref{eq:adv_objective} is a standard continuous knapsack problem. The optimal strategy has the following greedy steps:
1. Sort the nodes in descending order of their contribution coefficients $c_i$.
2. Iteratively set $\delta_i = \rho_i$ for the highest-ranked nodes until the budget $\gamma$ is exhausted. 
3. The last affected node receives any fractional remainder of the budget, and all lower-ranked nodes receive $\delta_j = 0$.

\spara{Ellipsoidal Set Strategy (Gradient Alignment):}
For the ellipsoidal set, the noise is constrained by $\boldsymbol{\delta}^T {\Sigma}^{-1} \boldsymbol{\delta} \le \omega^2$. Note that no explicit box constraints $\delta_i\in[0,1]$ are needed because $\omega \in \left[0, \, 1 / \max_i \sigma_i \right]$ (see Appendix \ref{appendix:covariance_setup}).
Let $\mathbf{u} = {\Sigma}^{-1/2} \boldsymbol{\delta}$. The constraint becomes $\mathbf{u}^T \mathbf{u} = \|\mathbf{u}\|_2^2 \le \omega^2$, and the objective becomes $\max_{\mathbf{u}} ({\Sigma}^{1/2} \mathbf{c})^T \mathbf{u}$.
The maximum is achieved when $\mathbf{u}$ is perfectly aligned with ${\Sigma}^{1/2} \mathbf{c}$:
\begin{equation*}
    \mathbf{u}^* = \omega \frac{{\Sigma}^{1/2} \mathbf{c}}{\|{\Sigma}^{1/2} \mathbf{c}\|_2}.
\end{equation*}
Transforming back to $\boldsymbol{\delta} = {\Sigma}^{1/2} \mathbf{u}$, the worst-case noise is given in closed form by:
\begin{equation*}
    \boldsymbol{\delta} = \omega \frac{{\Sigma} \mathbf{c}}{\sqrt{\mathbf{c}^T {\Sigma} \mathbf{c}}}.
\end{equation*}
This demonstrates that for both sets, if the network stubbornness parameter $\mathbf{a}$ is fixed, the inner minimization problem is easy to solve. The true complexity of ROMSA arises from optimizing $\mathbf{q}(\mathbf{a})$ against a noise that "reacts" to our choices.

\section{Proof of Lemma \ref{lemma:reform}}
\label{appendix:robust_proofs}

In this section, we provide the full proofs and resulting LP and SOCP from equation (\ref{eq:transformation}) for Step 1 under both polyhedral and ellipsoidal uncertainty sets.
Recall that our goal is to solve:
\begin{equation*}
    \influencetarg = \arg\max_{\mathbf{q}} \min_{\mathbf{\hat{s}} \in \mathcal{U}_s} \mathbf{q}^T \mathbf{\hat{s}}
\end{equation*}
By substituting the multiplicative noise model $\hat{s_i} = s_{i}(1 - \delta_i)$, the inner minimization becomes:
\begin{equation}
\label{eq:inner_sub_form}
    \min_{\boldsymbol{\delta}} \sum_{i=1}^n q_i s_{i} (1 - \delta_i) = \mathbf{q}^T \mathbf{s} - \max_{\boldsymbol{\delta}} \sum_{i=1}^n q_i s_{i} \delta_i.
\end{equation}

\subsection{Polyhedral Uncertainty}
For the polyhedral set $\mathcal{U}_s^{poly}$, the constraints on the noise vector $\boldsymbol{\delta}$ are $0 \le \delta_i \le \rho_i$ and $\sum_{i=1}^n \delta_i \le \gamma$. The maximization problem from Eq.~\eqref{eq:inner_sub_form} is the following Linear Program:
\begin{equation*}
\begin{aligned}
    & \max_{\boldsymbol{\delta}} \quad && \sum_{i=1}^n q_i s_{i} \delta_i \\
    & \text{subject to} \quad && \sum_{i=1}^n \delta_i \le \gamma \\
    & && \delta_i \le \rho_i, \quad \forall i \in \{1, \dots, n\} \\
    & && \delta_i \ge 0, \quad \forall i \in \{1, \dots, n\}
\end{aligned}
\end{equation*}
To construct the dual of this LP, we associate a dual variable $\theta \ge 0$ with the budget constraint, and a dual variable $p_i \ge 0$ with each individual bound constraint $\delta_i \le \rho_i$. The dual problem is:
\begin{equation*}
\begin{aligned}
    & \min_{\theta, \boldsymbol{p}} \quad && \gamma \theta + \sum_{i=1}^n \rho_i p_i \\
    & \text{subject to} \quad && \theta + p_i \ge q_i s_{i}, \quad \forall i \in \{1, \dots, n\} \\
    & && \theta \ge 0, \quad p_i \ge 0, \quad \forall i \in \{1, \dots, n\}
\end{aligned}
\end{equation*}
By strong duality, the optimal values of the primal and dual are equal. 
We substitute the dual back into Eq.~\eqref{eq:inner_sub_form}. Since the original objective subtracts this term, the minimization becomes a maximization:
\begin{equation*}
    \max_{\mathbf{q}} \left( \mathbf{q}^T \mathbf{s} - \min_{\theta, \boldsymbol{p}} \left[ \gamma \theta + \sum_{i=1}^n \rho_i p_i \right] \right) = \max_{\mathbf{q}, \theta, \boldsymbol{p}} \left( \mathbf{q}^T \mathbf{s} - \gamma \theta - \sum_{i=1}^n \rho_i p_i \right).
\end{equation*}
Including the constraints from the dual problem and the constraints for $\mathbf{q}$ to be an influence distribution, yields the Linear Program (LP):
\begin{equation}
\label{eq:poly_robust}
\begin{aligned}
    & \max_{\influence, \theta, \boldsymbol{p}} \quad && \influence^T \inner - \gamma \theta - \sum_{i=1}^n \rho_i p_i \\
    & \text{\emph{subject to}} \quad && \theta + p_i \ge q_i s_{i}, \quad \forall i \in \{1, \dots, n\} \\
    & && \sum_{i=1}^n q_i = 1, \quad  q_i \ge 0, \quad \forall i \in \{1, \dots, n\} \\
    & && \theta \ge 0, \quad p_i \ge 0, \quad \forall i \in \{1, \dots, n\}
\end{aligned}
\end{equation} \hfill $\blacksquare$

\subsection{Ellipsoidal Uncertainty}
For the ellipsoidal set $\mathcal{U}_s^{\text{ellip}}$, the noise vector $\noise$ is bounded as $\boldsymbol{\delta}^T {\Sigma}^{-1} \boldsymbol{\delta} \le \omega^2$.
Defining $\mathbf{c}$ such that $c_i = q_i s_{i}$ (or $\mathbf{c} = \text{diag}(\mathbf{s})\mathbf{q}$ in matrix form, where $\text{{diag}}(\mathbf{s})$ is the diagonal matrix of the input inner opinions), the maximization problem from Eq.~\eqref{eq:inner_sub_form} is:
\begin{equation*}
    \max_{\boldsymbol{\delta}} \quad \mathbf{c}^T \boldsymbol{\delta} \quad \text{subject to} \quad \boldsymbol{\delta}^T {\Sigma}^{-1} \boldsymbol{\delta} \le \omega^2.
\end{equation*}
As shown in Appendix \ref{appendix:inner_min} the maximum is achieved when: 

$$\boldsymbol{\delta} = \omega \frac{{\Sigma} \mathbf{c}}{\sqrt{\mathbf{c}^T {\Sigma} \mathbf{c}}}$$ 

Thus the exact maximum value is given analytically by:
\begin{equation*}
    \omega \sqrt{\mathbf{c}^T {\Sigma} \mathbf{c}} = \omega \sqrt{ (\text{diag}(\mathbf{s})\mathbf{q})^T {\Sigma} (\text{diag}(\mathbf{s})\mathbf{q}) }.
\end{equation*}
Using the Cholesky decomposition ${\Sigma} = ({\Sigma}^{1/2})^T {\Sigma}^{1/2}$, we can rewrite this term as an L2-norm:
\begin{equation*}
    \omega \left\| {\Sigma}^{1/2} \text{diag}(\mathbf{s}) \mathbf{q} \right\|_2.
\end{equation*}
Substituting this closed-form penalty back into the outer problem, we have:
\begin{equation*}
    \max_{\mathbf{q}} \quad \mathbf{q}^T \mathbf{s} - \omega \left\| {\Sigma}^{1/2} \text{diag}(\mathbf{s}) \mathbf{q} \right\|_2.
\end{equation*}
To format this as a standard Second-Order Cone Program (SOCP), we introduce the variable $t \in \mathbb{R}$ to bound the norm:
\begin{equation*}
    \left\| {\Sigma}^{1/2} \text{diag}(\mathbf{s}) \mathbf{q} \right\|_2 \le t.
\end{equation*}
Replacing the norm in the objective function with $t$ yields the Second-Order Cone Program (SOCP):
\begin{equation}
\label{eq:ellip_robust}
\begin{aligned}
    & \max_{\influence, t} \quad && \influence^T \inner - \omega t \\
    & \text{\emph{subject to}} \quad && \left\| {\Sigma}^{1/2} \text{\emph{diag}}(\inner) \influence \right\|_2 \le t \\
    & && \sum_{i=1}^n q_i = 1, \quad  q_i \ge 0, \quad \forall i \in \{1, \dots, n\}
\end{aligned}
\end{equation}
\hfill $\blacksquare$

\section{Proofs for Lemmas \ref{lm:poly_hetero_binary} and \ref{lemma:ellip_closed_form}}
\label{app:Closed Form Algs}

In this appendix, we provide the detailed mathematical proofs and derivations for the results presented in Section \ref{sec:algorithm}.

\subsection{Proof of Lemma \ref{lm:poly_hetero_binary}: Polyhedral Noise}

\begin{proof}
To prove the optimal target influence $\mathbf{q}^*$ strictly falls into one of two categories, we analyze the LP formulation derived in Lemma \ref{lemma:reform}. Let $A$ be the active set of nodes where $q_i > 0$. From the robust formulation in Eq.~\eqref{eq:poly_robust}, the problem contains the inequality constraints $\theta + p_i \ge q_i s_{i}$ for all $i$. 
When the budget $\gamma > 0$ and local bounds $\rho_i > 0$, the variables $\theta$ and $p_i$ strictly penalize the objective function. 
By definition, the optimal solver will drive $p_i$ to its lowest valid limit: $p_i = \max(0, q_i s_i - \theta)$. 
We establish that in the optimal solution, no active node can remain strictly below the threshold ($q_i s_i < \theta$). 
If such a node existed, its penalty would be $p_i = 0$, and the solver could strictly improve the objective by reallocating influence between this node and other active nodes until it either reaches the threshold ($q_i s_i = \theta$) or is completely divested ($q_i = 0$), rendering it inactive. 
Consequently, for all nodes in the optimal active set, the constraint is perfectly tight: $q_i s_{i} = \theta + p_i$.

Using this tight constraint, we substitute the control variable $q_i = (\theta + p_i)/s_{i}$ into the distribution constraint $\sum_{i \in A} q_i = 1$, yielding a single constraint for the dual variables $\theta$ and $\mathbf{p}$:
\begin{equation}
\label{eq:knapsack_constraint}
    \theta \sum_{i \in A} \frac{1}{s_{i}} + \sum_{i \in A} \frac{p_i}{s_{i}} = 1.
\end{equation}
Next, substituting $q_i s_{i} = \theta + p_i$ into the objective function allows us to rewrite the maximization as:
\begin{equation}
\label{eq:knapsack_objective}
    \max_{\theta, p_i \ge 0} \quad \theta(|A| - \gamma) + \sum_{i \in A} p_i(1 - \rho_i).
\end{equation}
This transformed problem is exactly a continuous (fractional) knapsack problem with a single capacity constraint equal to 1. We can define the objective coefficients as "Values" ($V_\theta = |A| - \gamma$ and $V_i = 1 - \rho_i$) and the constraint coefficients as "Costs" ($C_\theta = \sum_{i \in A} 1/s_{i}$ and $C_i = 1/s_{i}$). By the fundamental theorem of linear programming, the optimal solution to a single-constraint continuous knapsack problem allocates the entire capacity strictly to the single variable possessing the highest efficiency ratio (Value/Cost). 

We evaluate these efficiency ratios, denoted as $R$:
\begin{itemize}
    \item For the variable $\theta$, the ratio is $R_\theta = \frac{V_\theta}{C_\theta} = \frac{|A| - \gamma}{\sum_{i \in A} 1/s_{i}}$, which is identical to $\publicnoisy_{k}(|A|)$.
    \item For any variable $p_i$, the ratio is $R_i = \frac{V_i}{C_i} = \frac{1 - \rho_i}{1/s_{i}} = s_{i}(1 - \rho_i)$, which is identical to $\publicnoisy_{\ell}(i)$.
\end{itemize}

Because the optimal solution selects exactly \emph{one} variable to be non-zero:
\begin{enumerate}
    \item If a specific $\ell \in \{1,\dots,n\}$ gives $R_{\ell}$ as the global maximum, the solver sets $p_{\ell} > 0$ and $\theta = 0$ (with $p_{j \neq \ell} = 0$). From the tight constraint, $q_{\ell} s_{\ell} = p_{\ell}$, and $q_j s_{j} = 0$. Applying the influence constraint $\sum q_i = 1$ forces $q_{\ell} = 1$. The solver allocates all influence to node $\ell$ (\textbf{Case 1: Full-Reliance}).
    \item If $R_\theta$ is the global maximum, the solver sets $\theta > 0$ and all $p_i = 0$. From the tight constraint, $q_i s_{i} = \theta$ for all active nodes. This perfectly equalizes the contributions ($c_i = q_is_i$) across the active set $A$. To maximize $R_\theta$ for any set size $|A|$, the denominator $\sum 1/s_{i}$ must be minimized. Therefore, the optimal active set strictly consists of the $|A|$ nodes with the highest input inner opinions $s_{i}$ (\textbf{Case 2: Equalization}).
\end{enumerate}

Because any hybrid solution (where both $\theta > 0$ and $p_i > 0$) yields a strictly sub-optimal knapsack return no intermediate solutions exist, concluding the proof.

\end{proof}

\subsection{Proof of Lemma \ref{lemma:ellip_closed_form}: Uncorrelated Ellipsoidal Noise}

\begin{proof}
Let $N = \sqrt{\sum_{j=1}^n (q_j s_{j} \sigma_j)^2}$. We formulate the Lagrangian for the optimization problem~\eqref{eq:ellip_robust}, introducing $\lambda$ for the constraint $\sum q_i = 1$, and $\mu_i \ge 0$ for the non-negativity constraints $q_i \ge 0$:
\begin{equation}
    \mathcal{L}(\mathbf{q}, \lambda, \boldsymbol{\mu}) = \sum_{i=1}^n q_i s_{i} - \omega N - \lambda \left( \sum_{i=1}^n q_i - 1 \right) + \sum_{i=1}^n \mu_i q_i
\end{equation}
The KKT stationarity condition requires $\frac{\partial \mathcal{L}}{\partial q_i} = 0$ for all $i$:
\begin{equation}
\label{eq:kkt_stationarity_app}
    s_{i} - \omega \frac{q_i s_{i}^2 \sigma_i^2}{N} - \lambda + \mu_i = 0 \implies \omega \frac{q_i s_{i}^2 \sigma_i^2}{N} = s_{i} - \lambda + \mu_i
\end{equation}
By complementary slackness, $\mu_i q_i = 0$ for all $i$. This divides the nodes into two disjoint sets:
\begin{itemize}
    \item \textbf{Inactive Nodes ($q_i = 0$):} Since $q_i = 0$, Eq.~\eqref{eq:kkt_stationarity_app} reduces to $\mu_i = \lambda - s_{i}$. The dual feasibility condition $\mu_i \ge 0$ requires that $s_{i} \le \lambda$.
    \item \textbf{Active Nodes ($q_i > 0$):} Here, $\mu_i = 0$. Eq.~\eqref{eq:kkt_stationarity_app} becomes $\omega \frac{q_i s_{i}^2 \sigma_i^2}{N} = s_{i} - \lambda$. Since $\omega, N, q_i$, and $(s_{i} \sigma_i)^2$ are strictly positive, the left-hand side is positive, strictly requiring $s_{i} > \lambda$.
\end{itemize}
This proves that the optimal solution only activates nodes with input inner opinions $s_{i}$ strictly larger than $\lambda$. Assuming a descending sort order for the $s_{i}$ values, the active set comprises exactly the top $k$ nodes. For an active node $i \le k$, isolating $q_i$ yields $q_i = \frac{N}{\omega} \frac{s_{i} - \lambda}{(s_{i} \sigma_i)^2}$. Applying the constraint $\sum_{i=1}^k q_i = 1$ allows us to eliminate the constants and find the normalized closed-form expression:
\begin{equation}
    q_i^* = \frac{ \frac{s_{i} - \lambda}{s_{i}^2 \sigma_i^2} }{ \sum_{j=1}^k \frac{s_{j} - \lambda}{s_{j}^2 \sigma_j^2} }
\end{equation}

To compute $\lambda$, we take the active node condition, divide by $s_{i} \sigma_i$, square the result, and sum over the active set. The $N^2$ terms perfectly cancel out, leaving the equation:
\begin{equation}
    \sum_{i=1}^k \frac{(s_{i} - \lambda)^2}{(s_{i} \sigma_i)^2} = \omega^2. 
\end{equation}
Expanding this sum reveals a standard quadratic form: $A_k \lambda^2 - 2B_k \lambda + C_k = 0$. 

While a quadratic equation generally yields two roots, $\lambda_{\pm} = \frac{B_k \pm \sqrt{B_k^2 - A_k C_k}}{A_k}$, the physical constraints of our problem strictly invalidate the positive root. We can show this by returning to the active node condition: $\frac{s_{i} - \lambda}{(s_{i} \sigma_i)^2} = \frac{\omega}{N} q_i$, and summing over all $k$ active nodes yields:
\begin{equation}
    \sum_{i=1}^k \frac{s_{i}}{(s_{i} \sigma_i)^2} - \lambda \sum_{i=1}^k \frac{1}{(s_{i} \sigma_i)^2} = \frac{\omega}{N} \sum_{i=1}^k q_i
\end{equation}
Substituting our definitions for $A_k$ and $B_k$, and applying $\sum_{i=1}^k q_i = 1$, this simplifies exactly to:
\begin{equation}
    B_k - \lambda A_k = \frac{\omega}{N}
\end{equation}
Because $\omega$ and $N$ are both strictly positive, the fraction $\frac{\omega}{N}$ must be strictly positive. Therefore, the system enforces a strict mathematical boundary: $B_k - \lambda A_k > 0$, which implies $\lambda < \frac{B_k}{A_k}$. 

By definition, $A_k$ is a sum of positive variances, so $A_k > 0$. The positive root of the quadratic is $\lambda_+ = \frac{B_k}{A_k} + \frac{\sqrt{B_k^2 - A_k C_k}}{A_k}$, which is strictly greater than $\frac{B_k}{A_k}$. If we were to accept $\lambda_+$, the term $(B_k - \lambda_+ A_k)$ would become negative. This would require the distance norm $N$ to be a negative value, which is impossible. Furthermore, a negative $\frac{\omega}{N}$ ratio would force the active influences to be negative ($q_i < 0$), fundamentally violating the non-negativity constraint of influences for the Friedkin-Johnsen model. 
Thus, only the negative root $\lambda_- = \lambda_k$ guarantees a valid distribution.
\end{proof}

\subsection{Comment for Algorithms \ref{alg:poly_solver} and \ref{alg:ellip_solver}}

We have provided pseudocode for the algorithms derived in Lemmas \ref{lm:poly_hetero_binary} and \ref{lemma:ellip_closed_form}. 
The dominant operation in both algorithms is the initial sorting step, after which the optimal target influence $\mathbf{q}^*$ is computed in linear time, yielding an overall $\mathcal{O}(n \log n)$ complexity. 
To prevent division-by-zero errors, the input domains are bounded. 
Specifically, $s_i \in (0,1]$, as nodes with $s_i = 0$ yield zero objective benefit and are safely filtered out. 
Similarly, we require $\boldsymbol{\sigma}^2 \in \mathbb{R}_{>0}^n$, since zero-variance nodes are deterministic and can be excluded from our uncertainty formulation.

\section{Proof of Theorem \ref{thm:exact_realization}}
\label{app:Feasibility}

To prove Theorem \ref{thm:exact_realization}, we first provide a geometric proof establishing the universal existence of a solution. We then provide a constructive algebraic proof that derives the exact closed-form equations used to efficiently compute this solution.

\subsection{Part 1: Geometric Proof of Universal Feasibility}

\begin{proof}
To establish the universal feasibility of any target influence distribution $\mathbf{q}^*$, we must prove that the function mapping stubbornness $\mathbf{a}$ to influence $\mathbf{q}$ is \textit{surjective}. In our context, surjectivity guarantees that for every mathematically valid target $\mathbf{q}$ in the output space, there exists at least one physically valid stubbornness vector $\mathbf{a}$ in the input space that perfectly achieves it.

Let $\Delta_{n-1} = \left\{ \mathbf{x} \in \mathbb{R}^n \;\middle|\; \sum_{i=1}^n x_i = 1, \, x_i \ge 0 \right\}$ denote the standard probability simplex. To prove part 1, we formulate the network dynamics as a continuous mapping between simplices.  

While the physical stubbornness vector $\mathbf{a}$ is bounded by the hypercube $[0,1]^n$, we strictly restrict our search space also to the simplex, such that $\mathbf{a} \in \Delta_{n-1} \subset [0,1]^n$. This domain restriction is mathematically sound: if we can prove that any target influence is achievable using a stubbornness vector from this restricted subset $\Delta_{n-1}$, it is necessarily achievable within the broader set $[0,1]^n$.

Following the geometric definitions provided by \cite{jamison1976factoring}, for any subset of indices $J \subset \{1, \dots, n\}$, a \textit{face} of the simplex is defined as the subset of vectors where the entries indexed by $J$ are exactly zero:
\begin{equation*}
    \Delta_{n-1}(J) = \{ \mathbf{x} \in \Delta_{n-1} \mid x_j = 0 \text{ for each } j \in J \}
\end{equation*}

To establish surjectivity, we invoke Lemma 2.1 from \cite{jamison1976factoring}:
\begin{quote}
\textbf{Lemma 2.1 (Surjectivity of Face-Preserving Maps):} If $f$ is a continuous map of $\Delta_{n-1}$ into itself which carries each face of $\Delta_{n-1}$ into itself ($f(\Delta_{n-1}(J)) \subseteq \Delta_{n-1}(J)$), then $f$ is surjective.
\end{quote}

In the Friedkin-Johnsen model, the resulting network influence vector $\mathbf{q}(\mathbf{a})$ is the column average of the influence matrix ${Q} = ({I} - ({I}-{A}){W})^{-1}{A}$, where ${A} = \text{diag}(\mathbf{a})$. The resulting influence vector is a probability distribution satisfying $\sum_{i=1}^n q_i = 1$ and $q_i \ge 0$. Therefore, the influence equation defines a mapping $f: \Delta_{n-1} \to \Delta_{n-1}$ such that $f(\mathbf{a}) = \mathbf{q}(\mathbf{a})$.

We now verify that this mapping $f$ satisfies the two conditions of Lemma 2.1:
\begin{enumerate}
    \item \textbf{Continuity:} By construction, the influence vector relies on basic matrix operations (addition, multiplication, and inversion). Because $\mathbf{a} \in \Delta_{n-1}$, we are guaranteed that $\sum_{i=1}^n a_i = 1$, implying that at least one node has strictly positive stubbornness ($a_k > 0$). For a strongly connected (irreducible) graph with row-stochastic matrix ${W}$, the matrix inverse $({I} - ({I}-{A}){W})^{-1}$ is mathematically guaranteed to be non-singular and well-defined everywhere on $\Delta_{n-1}$. Thus, the mapping $f(\mathbf{a})$ is strictly continuous over its entire domain.
    
    \item \textbf{Face-Preservation:} Suppose a stubbornness vector $\mathbf{a}$ lies on a face $\Delta_{n-1}(J)$. By definition, $a_j = 0$ for all $j \in J$. Nodes with zero stubbornness ($a_j=0$) guarantee zero influence ($q_j=0$) to the output vector $\mathbf{q}$. As such the input face $\mathbf{a} \in\Delta_{n-1}(J)$ also lies on the exact same face $\mathbf{q} \in\Delta_{n-1}(J)$. This confirms that $f(\Delta_{n-1}(J)) \subseteq \Delta_{n-1}(J)$, meaning the mapping carries each face into itself.
\end{enumerate}

Because our mapping $f(\mathbf{a})$ satisfies both continuity and face-preservation, Lemma 2.1 dictates that its image must cover the entire target simplex $\Delta_{n-1}$. Therefore, for any theoretically optimal influence distribution $\mathbf{q}^* \in \Delta_{n-1}$, there exists at least one physically valid stubbornness vector $\mathbf{a}^* \in \Delta_{n-1} \subset [0,1]^n$ that strictly achieves $f(\mathbf{a}^*) = \mathbf{q}^*$.
\end{proof}

\subsection{Part 2: Constructive Proof}

\begin{proof}

In the Friedkin-Johnsen model, the influence matrix is defined as ${Q} = ({I} - ({I}-{A}){W})^{-1}{A}$, where ${A} = \text{diag}(\mathbf{a})$. The influence vector $\mathbf{q}^*$ is:

\begin{equation}
(\mathbf{q}^*)^T = \frac{1}{n} \mathbf{1}^T {Q} = \frac{1}{n} \mathbf{1}^T ({I} - ({I}-{A}){W})^{-1} {A}
\end{equation}

We define the row vector $\mathbf{y}^T$:

\begin{equation}
\label{eq:aux_y}
\mathbf{y}^T = \frac{1}{n} \mathbf{1}^T ({I} - ({I}-{A}){W})^{-1}
\end{equation}

Substituting this back into the influence equation yields a direct, element-wise relationship between the target influence, the vector $\mathbf{y}$, and the unknown stubbornness:

\begin{equation}
\label{eq:decoupled_a}
(\mathbf{q}^*)^T = \mathbf{y}^T {A} \implies \mathbf{q}^* = {A}\mathbf{y} \implies q_i^* = a_i y_i
\end{equation}

From Eq.~\eqref{eq:decoupled_a}, it is strictly required that if $q_i^* = 0$, the corresponding stubbornness must be $a_i = 0$. For active nodes where $q_i^* > 0$, we must solve for $y_i$.

We multiply both sides of Eq.~\eqref{eq:aux_y} from the right by the inverse matrix term and transpose the entire equation to obtain a column-vector system:

\begin{equation}
({I} - {W}^T({I}-{A}))\mathbf{y} = \frac{1}{n} \mathbf{1}
\end{equation}

Expanding the left side yields $\mathbf{y} - {W}^T\mathbf{y} + {W}^T{A}\mathbf{y} = \frac{1}{n} \mathbf{1}$. Applying the equation: ${A}\mathbf{y} = \mathbf{q}^*$, matrix ${A}$ is entirely eliminated, leaving a sparse linear system:

\begin{equation}
({I} - {W}^T)\mathbf{y} = \frac{1}{n} \mathbf{1} - {W}^T\mathbf{q}^*
\end{equation}

Let ${C} = ({I} - {W}^T)$ and $\mathbf{b} = \frac{1}{n} \mathbf{1} - {W}^T\mathbf{q}^*$. Because ${W}$ is row-stochastic, ${W}\mathbf{1} = \mathbf{1}$, ${C}$ is a \emph{singular} matrix whose null space is spanned by the stationary distribution $\boldsymbol{\pi}$ (where ${C}\boldsymbol{\pi} = \mathbf{0}$). By the Fredholm Alternative, a solution to ${C}\mathbf{y} = \mathbf{b}$ exists if and only if the null space of $C^T$, which is: $\mathbf{1}$, because: $C^T\mathbf{1} = ({I} - {W})\mathbf{1} = \mathbf{0} $, is orthogonal with $\mathbf{b}$, i.e., $\mathbf{1}^T \mathbf{b} = 0$. Evaluating the sum:

\begin{equation}
\mathbf{1}^T \mathbf{b} = \mathbf{1}^T \left( \frac{1}{n} \mathbf{1} - {W}^T\mathbf{q}^* \right) = 1 - (\mathbf{1})^T \mathbf{q}^* = 1 - \mathbf{1}^T\mathbf{q}^* = 0
\end{equation}

Because the target influence distribution $\mathbf{q}^*$ sums to exactly 1, the system is perfectly consistent. Using any sparse linear solver, we can find a base solution $\mathbf{y}_0$. The complete family of valid solution vectors for equation $C \mathbf{y = \mathbf{b}}$ is given by: $\mathbf{y} = \mathbf{y}_0 + \eta \boldsymbol{\pi}$, for any scalar $\eta \in \mathbb{R}$.

Using Eq.~\eqref{eq:decoupled_a}, the stubbornness is given by $a_i = \frac{q_i^*}{y_{0,i} + \eta \pi_i}$. To satisfy the upper boundary constraint $a_i \le 1$ for all active nodes, the denominator must satisfy $y_{0,i} + \eta \pi_i \ge q_i^*$. Because the graph is strongly connected, the Perron-Frobenius theorem guarantees $\pi_i > 0$, for all nodes, allowing us to isolate $\eta$:

\begin{equation}
\eta \ge \frac{q_i^* - y_{0,i}}{\pi_i}
\end{equation}

To ensure that: $a_i \in [0,1] \quad \forall i$, across the entire network while achieving the target influence distribution $\mathbf{q^*}$, we select the scalar shift $\eta$ that satisfies: 

\begin{equation}
\eta = \max_{i : q_i^* > 0} \left( \frac{q_i^* - y_{0,i}}{\pi_i} \right)
\end{equation}

By definition of the maximum, by setting this $\eta$ guarantees that $y_{0,i} + \eta \pi_i \ge q_i^*$ for every active node $i$. Because active nodes strictly require $q_i^* > 0$, it immediately follows that the denominator is strictly positive, i.e., $y_{0,i} + \eta \pi_i \ge q_i^* > 0$. Consequently, the stubbornness fraction $a_i^* = \frac{q_i^*}{y_{0,i} + \eta \pi_i}$ is guaranteed to satisfy both $a_i^* \le 1$ and $a_i^* > 0$. Substituting $\eta$ yields the optimal, bounded stubbornness $a_i^*$ values in Eq.~\eqref{eq:optimal_stubbornness}, completing the proof.

\end{proof}

\section{Minimizing Stubbornness Change for Step 2}
\label{app:ETA}

After an optimal target influence $\influencetarg$ is found in Step 1, Theorem~\ref{thm:exact_realization} reveals that the set of all stubbornness vectors that can achieve $\influencetarg$ are parameterized by a scalar $\eta$. 
We can leverage this to find the stubbornness vector $\stubbotarg$ that minimizes the deviation from the initial stubbornness vector $\stubboInit$. 
We formulate this as minimizing the $L_2$ squared distance: $\|\stubbotarg - \stubboInit \|_2^2$. From here on out for notation simplicity we will write the initial stubbornness vector $\stubboInit$ as $\stubbo$. So we can minimize $\|\stubbotarg - \stubbo \|_2^2$ by solving:
\begin{equation}
\label{eq:budget_a}
    \eta^* = \arg\min_{\eta \ge \eta_{min}} \sum_{i : q_i^* > 0} \left( \frac{q_i^*}{y_{0,i} + \eta \pi_i} - a_{i} \right)^2
\end{equation}
Because this optimization problem is 1-dimensional and bounded, we can efficiently solve it to find $\eta^* \in [\eta_{min},+ \infty)$.
We have defined the stubbornness function for a single active node $i$ ($q_i^* > 0$) as:
\begin{equation}
    a_i^*(\eta) = \frac{q_i^*}{y_{0,i} + \eta \pi_i}
\end{equation}
The cost function over the set of active nodes $\mathcal{A}$, with $q_i>0$ we are trying to minimize is given by:
\begin{equation}
\label{eq:app_objective}
    f(\eta) = \sum_{i \in \mathcal{A}} \left( a_i^*(\eta) - a_i \right)^2
\end{equation}
subject to the strict lower bound $\eta \ge \eta_{min} = \max_{i \in \mathcal{A}} \left( \frac{q_i^* - y_{0,i}}{\pi_i} \right)$ ensures that $a_i^*(\eta) \le 1$ for all nodes as we have shown in Appendix \ref{app:Feasibility}.

\subsection{Analytical Limitations and Non-Convexity}

To find an analytical closed-form solution, we set the first derivative of Eq.~\eqref{eq:app_objective} to zero. The first and second derivatives of the individual stubbornness functions $a_i^*(\eta)$ with respect to $\eta$ are:
\begin{equation}
    \frac{\partial a_i^*}{\partial \eta} = -\frac{q_i^* \pi_i}{(y_{0,i} + \eta \pi_i)^2} \quad (< 0), \quad \text{and} \quad \frac{\partial^2 a_i^*}{\partial\eta^2} = \frac{2 q_i^* \pi_i^2}{(y_{0,i} + \eta \pi_i)^3} \quad (> 0)
\end{equation}
Applying the chain rule, the first derivative of the total cost function is: 

$$
f'(\eta) = 2 \sum_{i \in \mathcal{A}} (a_i^*(\eta) - a_i) \frac{\partial a_i^*}{\partial \eta} = 0.
$$

Solving this algebraically to find a closed-form solution is not possible. 
This is because it requires finding a common denominator across all $|\mathcal{A}|$ terms, which yields a polynomial of degree up to $3|\mathcal{A}| - 2$. 
Thus for any network with a non-trivial number of active nodes, there is no general algebraic solution for this equation.

We also analyze the convexity of $f(\eta)$ to determine if standard gradient-based optimization can guarantee global optimality. The second derivative of the objective is:
\begin{equation}
\label{eq:app_second_deriv}
    f''(\eta) = 2 \sum_{i \in \mathcal{A}} \Big[ \underbrace{\left(\frac{\partial a_i^*}{\partial \eta}\right)^2}_{> 0} + \underbrace{(a_i^*(\eta) - a_i)}_{\text{Can be } < 0} \underbrace{\frac{\partial^2 a_i^*}{\partial \eta^2}}_{> 0} \Big]
\end{equation}
If the optimal target influence requires a node to become strictly more open-minded than its initial state, we have $a_i^*(\eta) < a_i$. In such cases, the term $(a_i^*(\eta) - a_i)$ becomes strictly negative. If this negative magnitude is sufficiently large across the network, it can force $f''(\eta) < 0$. Thus, because the second derivative is not guaranteed to be positive, $f(\eta)$ is non-convex and may exhibit multiple local minima, rendering gradient methods unreliable.

\subsection{An Optimal \texorpdfstring{$\mathcal{O}(n)$}{O(n)} Algorithm}

Despite the non-convexity of the objective $f(\eta)$, the optimization domain is strictly one-dimensional and Lipschitz continuous, meaning its rate of change is bounded. We exploit this by designing a deterministic search space bounded by the structural limits of the stubbornness function. The lower bound, $\eta_{min}$, guarantees that $a_i^*(\eta) \le 1$. Conversely, because $a_i^*(\eta)$ is strictly monotonically decreasing, we establish a finite upper bound $\eta_{max}$ by forcing the target stubbornness of all active nodes below a small tolerance $\epsilon$, yielding:
$$\eta_{max} = \max_{i \in \mathcal{A}} \frac{q_i^* - \epsilon y_{0,i}}{\epsilon \pi_i}.$$ 

The global minimum is guaranteed to lie within the interval $[\eta_{min}, \eta_{max}]$.
To pinpoint this minimum, we partition the interval into a dense grid of $K$ uniformly spaced points and evaluate $f(\eta)$ on each. 
Evaluating the cost function $f$ requires operations over the active nodes $\mathcal{A}$, thus the entire grid search completes in $\mathcal{O}(K |\mathcal{A}|)$ time. 
Once the optimal grid point is identified, we isolate its immediate 2 neighbors and form a local bracket. 
Within this bracket, we apply Brent's method \cite{brent1973algorithms}, a root-finding algorithm combining bisection and inverse quadratic interpolation, to converge to the exact minimum $\eta^*$. 
Because $K$ is a fixed hyperparameter and $|\mathcal{A}| \le n$, the total time complexity strictly reduces to $\mathcal{O}(n)$, ensuring the method scales seamlessly to massive networks. 
Throughout our experimental evaluation we utilize this $\eta^*$ for the $\stubbotarg$ that achieves the  target influence $\influencetarg$.

\section{The LSMR Algorithm}
\label{app:LSMR}

In Theorem \ref{thm:exact_realization}, finding the base solution $\mathbf{y}_0$ requires solving the linear system:
\begin{equation}
    (\mathbf{I} - \mathbf{W}^T)\mathbf{y} = \frac{1}{n}\mathbf{1} - \mathbf{W}^T\mathbf{q}^*
\end{equation}

Let $\mathbf{C} = (\mathbf{I} - \mathbf{W}^T)$ and $\mathbf{b} = \frac{1}{n}\mathbf{1} - \mathbf{W}^T\mathbf{q}^*$. For real-world network applications, the interaction matrix $\mathbf{W}$ is typically large and highly sparse. Consequently, $\mathbf{C}$ is also large and sparse. Also, because $\mathbf{W}$ is row-stochastic, $\mathbf{C}$ is singular ($\mathbf{C}^T \mathbf{1} = \mathbf{0}$). Direct solvers, such as dense matrix inversion or LU factorization, are computationally intractable for large $n$ and destroy the sparsity of the graph. 

To compute $\mathbf{y}_0$ efficiently, we employ the \textbf{LSMR} (Least-Squares Minimum Residual) method \cite{fong2011lsmr}. LSMR is an iterative Krylov subspace method designed specifically for solving sparse linear equations and least-squares problems of the form:
\begin{equation}
    \min_{\mathbf{y}} \|\mathbf{C}\mathbf{y} - \mathbf{b}\|_2
\end{equation}

LSMR is based on the Golub-Kahan bidiagonalization process \cite{golub1965calculating}, which iteratively reduces the matrix $\mathbf{C}$ into a lower bidiagonal form. At each iteration $k$, the algorithm searches within the Krylov subspace:
\begin{equation}
    \mathcal{K}_k(\mathbf{C}^T\mathbf{C}, \mathbf{C}^T\mathbf{b}) = \text{span}\{ \mathbf{C}^T\mathbf{b}, (\mathbf{C}^T\mathbf{C})\mathbf{C}^T\mathbf{b}, \dots, (\mathbf{C}^T\mathbf{C})^{k-1}\mathbf{C}^T\mathbf{b} \}
\end{equation}

Crucially, LSMR operates without ever explicitly forming $\mathbf{C}^T\mathbf{C}$, since forming $\mathbf{C}^T\mathbf{C}$ severely degrades matrix sparsity. Instead, LSMR only requires computing matrix-vector products $\mathbf{C}\mathbf{v}$ and $\mathbf{C}^T\mathbf{u}$, making its memory requirements $\mathcal{O}(E)$, where $E$ is the number of edges in the graph.

We select LSMR over its popular predecessor, LSQR \cite{paige1982lsqr}, due to the specific algebraic properties of our model:
\begin{enumerate}
    \item \textbf{Exact Solution for Consistent Systems:} As proven in Appendix \ref{app:Feasibility}, our singular system $\mathbf{C}\mathbf{y} = \mathbf{b}$ is strictly consistent ($\mathbf{1}^T \mathbf{b} = 0$). Therefore, LSMR does not merely find an approximate "best-fit"; the least-squares residual $\|\mathbf{C}\mathbf{y} - \mathbf{b}\|_2$ converges to exactly zero.
    \item \textbf{Monotonic Convergence:} Matrices derived from graph Laplacians or $\mathbf{I} - \mathbf{W}^T$ are often poorly conditioned. While LSQR minimizes $\|\mathbf{C}\mathbf{y} - \mathbf{b}\|_2$, it can exhibit highly erratic behavior in the residual of the normal equations $\|\mathbf{C}^T\mathbf{r}\|_2$, leading to premature termination in iterative solvers. LSMR strictly ensures that both $\|\mathbf{C}\mathbf{y} - \mathbf{b}\|_2$ and $\|\mathbf{C}^T\mathbf{r}\|_2$ decrease monotonically at every iteration \cite{fong2011lsmr}. This provides robust, deterministic convergence even for structurally difficult graph topologies.
\end{enumerate}

By combining the sparse representations of SciPy \cite{virtanen2020scipy} with LSMR, our implementation isolates the mathematically exact $\mathbf{y}_0$ base solution. Because each iteration requires only sparse matrix-vector multiplications, the time complexity per iteration is $\mathcal{O}(E)$. 

\section{Extended Experimental Results}
\label{app:Experiements_More}

This section provides extended experimental results to support the experimental evaluation presented in Section~\ref{sec:experiments}.

\subsection{Extended RQ1 Results}

In Section~\ref{sec:experiments}, we demonstrated the robustness of our proposed solutions against worst-case noise using the \emph{Epinions} dataset. Here, we present the identical experimental setup evaluated on: \emph{Gowalla} (Figure~\ref{fig:rq1_extended_gowalla}), \emph{Google} (Figure~\ref{fig:rq1_extended_google}), and \emph{Pokec} (Figure~\ref{fig:rq1_extended_pokec}). 

Consistent with the main text, our \emph{Robust} solution continuously dominates the \emph{Nominal} and \emph{Initial} baselines across all graphs, maintaining high public opinion even under severe noise budgets ($\gamma, \omega$). Furthermore, the budgeted variant smoothly utilizes increasing intervention budgets ($b$) without suffering from the extreme volatility exhibited by the nominal strategy.

\begin{figure*}[ht] 
    \centering
    \includegraphics[width=0.5\textwidth]{Plots/rq1_shared_legend.pdf}

    \begin{subfigure}[b]{0.24\textwidth}
        \centering
        \includegraphics[width=\textwidth]{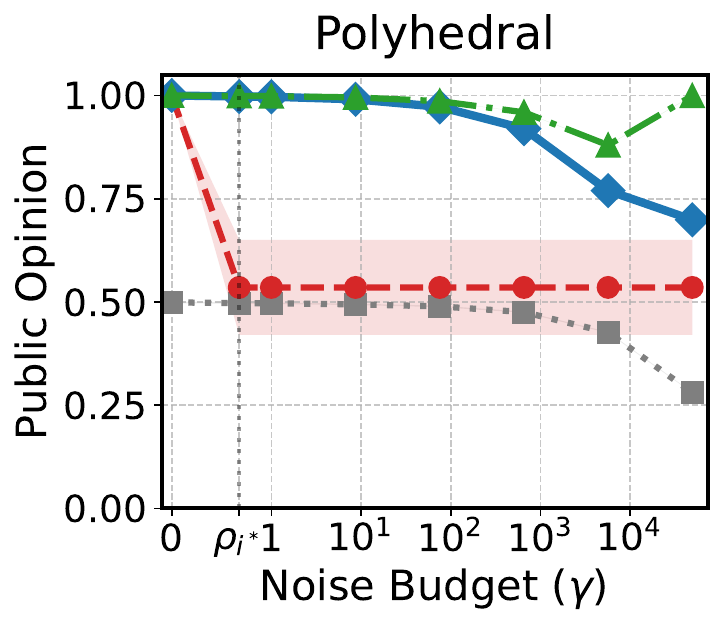}
        \caption{}
    \end{subfigure}\hfill
    \begin{subfigure}[b]{0.24\textwidth}
        \centering
        \includegraphics[width=\textwidth]{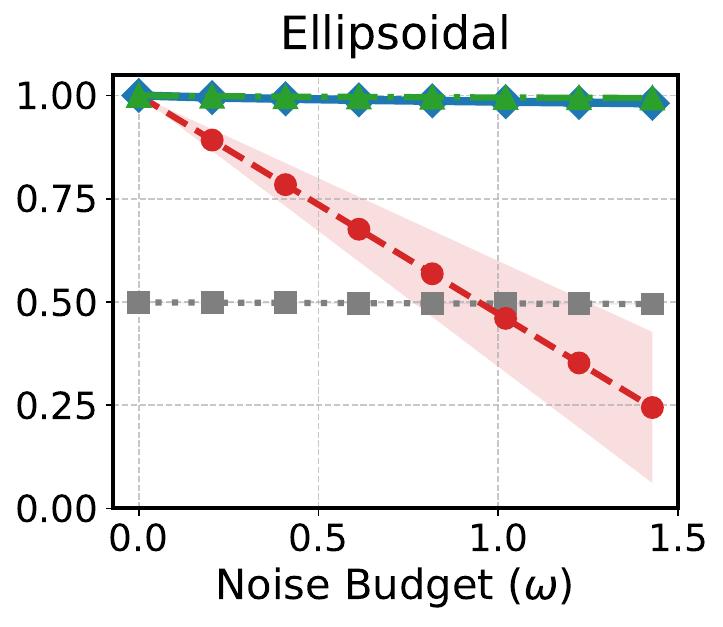}
        \caption{}
    \end{subfigure}\hfill
    \begin{subfigure}[b]{0.24\textwidth}
        \centering
        \includegraphics[width=\textwidth]{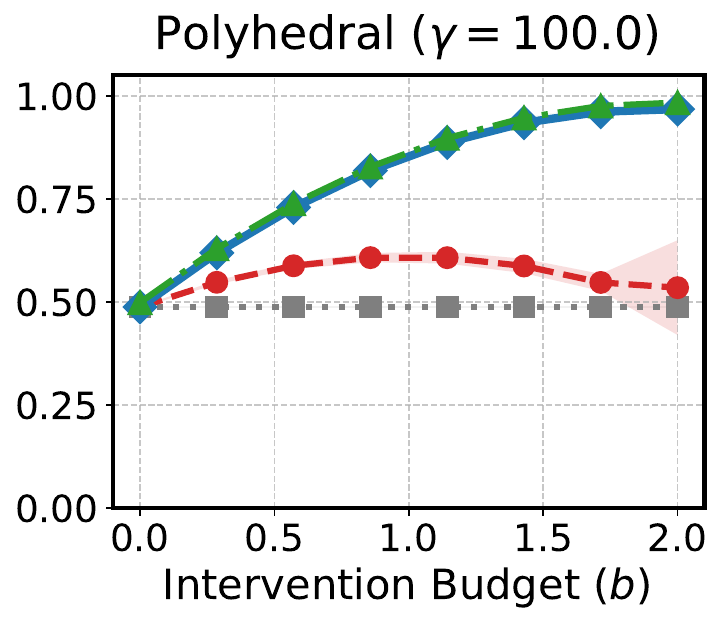}
        \caption{}
    \end{subfigure}\hfill
    \begin{subfigure}[b]{0.24\textwidth}
        \centering
        \includegraphics[width=\textwidth]{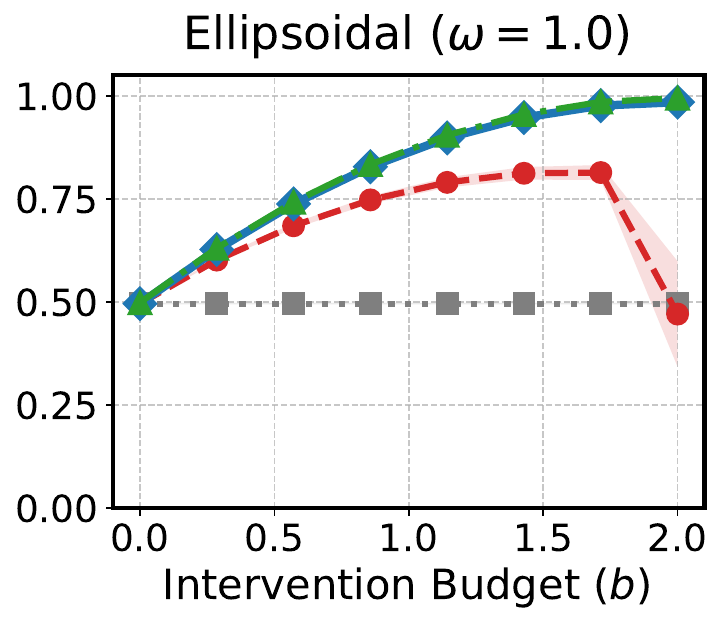}
        \caption{}
    \end{subfigure}
    
    \caption{Public opinion under varying noise and intervention budgets for the {Gowalla} dataset.}
    \label{fig:rq1_extended_gowalla}
\end{figure*}

\begin{figure*}[ht] 
    \centering
    \includegraphics[width=0.5\textwidth]{Plots/rq1_shared_legend.pdf}

    \begin{subfigure}[b]{0.24\textwidth}
        \centering
        \includegraphics[width=\textwidth]{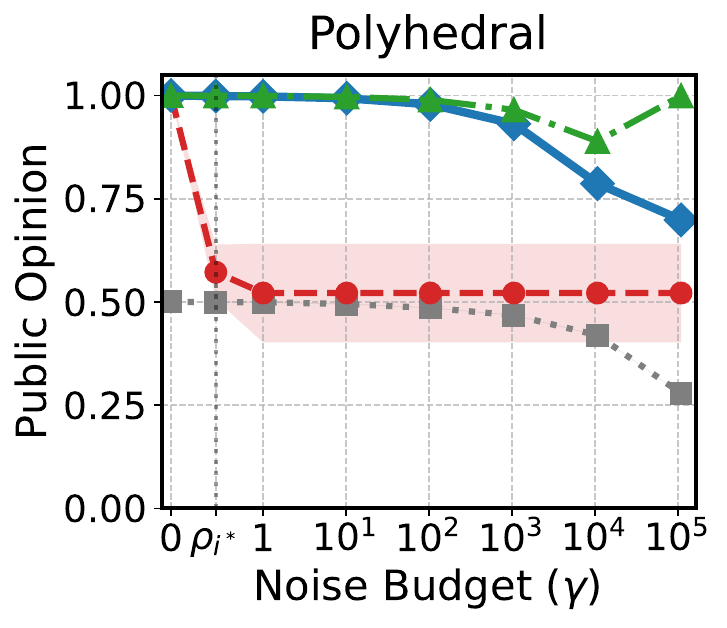}
        \caption{}
    \end{subfigure}\hfill
    \begin{subfigure}[b]{0.24\textwidth}
        \centering
        \includegraphics[width=\textwidth]{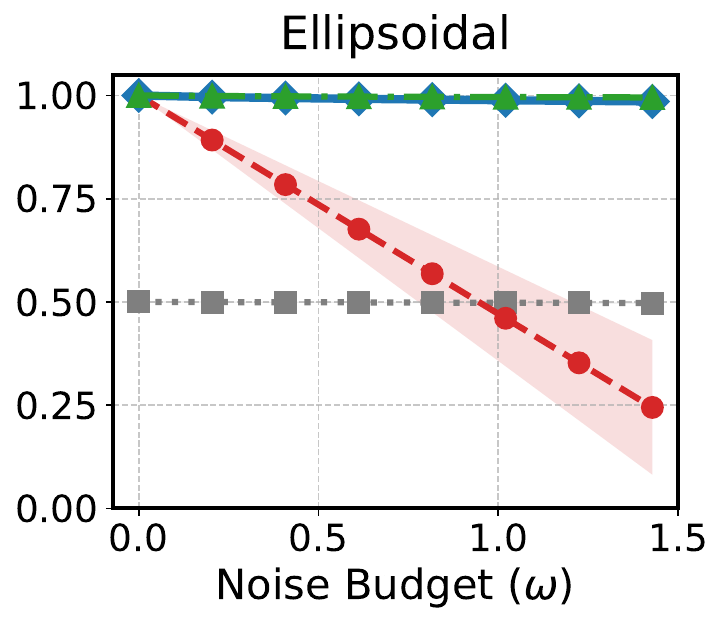}
        \caption{}
    \end{subfigure}\hfill
    \begin{subfigure}[b]{0.24\textwidth}
        \centering
        \includegraphics[width=\textwidth]{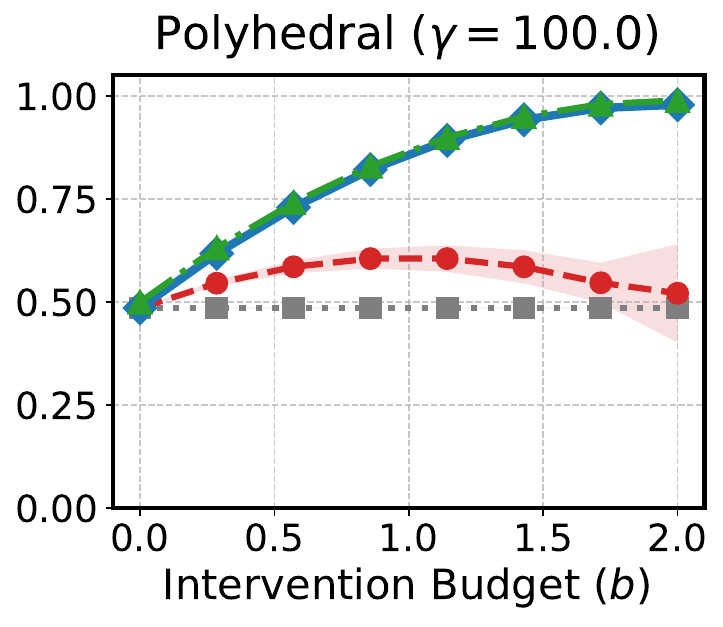}
        \caption{}
    \end{subfigure}\hfill
    \begin{subfigure}[b]{0.24\textwidth}
        \centering
        \includegraphics[width=\textwidth]{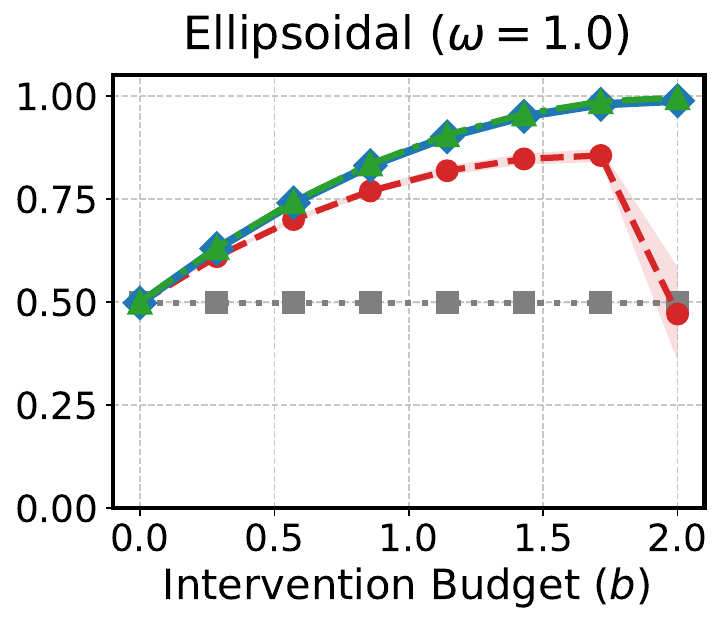}
        \caption{}
    \end{subfigure}
    
    \caption{Public opinion under varying noise and intervention budgets for the {Google} dataset.}
    \label{fig:rq1_extended_google}
\end{figure*}

\begin{figure*}[ht] 
    \centering
    \includegraphics[width=0.5\textwidth]{Plots/rq1_shared_legend.pdf}

    \begin{subfigure}[b]{0.24\textwidth}
        \centering
        \includegraphics[width=\textwidth]{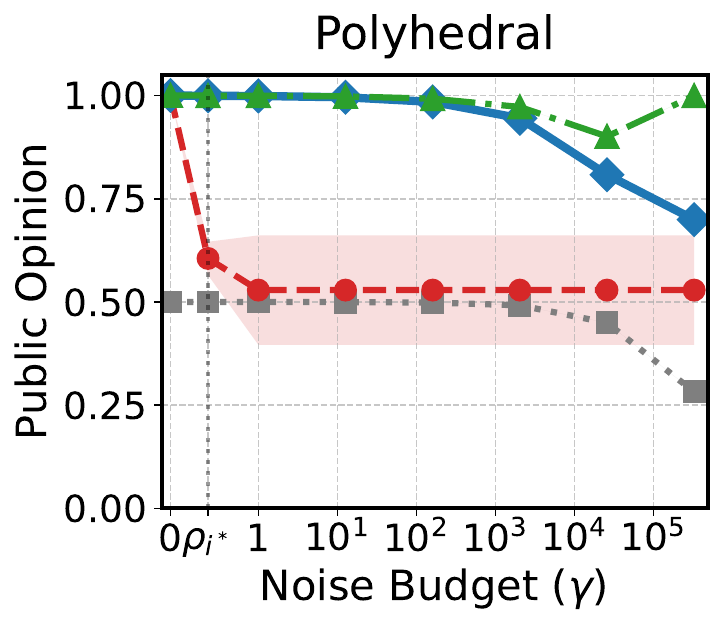}
        \caption{}
    \end{subfigure}\hfill
    \begin{subfigure}[b]{0.24\textwidth}
        \centering
        \includegraphics[width=\textwidth]{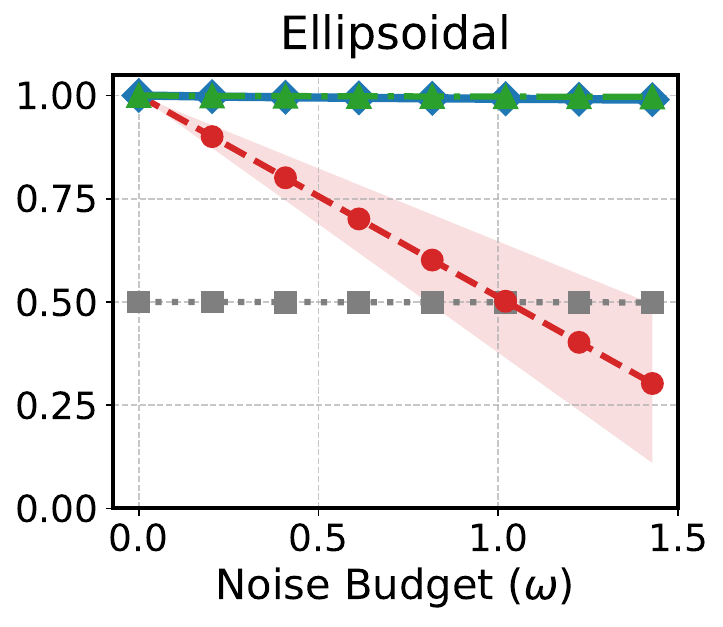}
        \caption{}
    \end{subfigure}\hfill
    \begin{subfigure}[b]{0.24\textwidth}
        \centering
        \includegraphics[width=\textwidth]{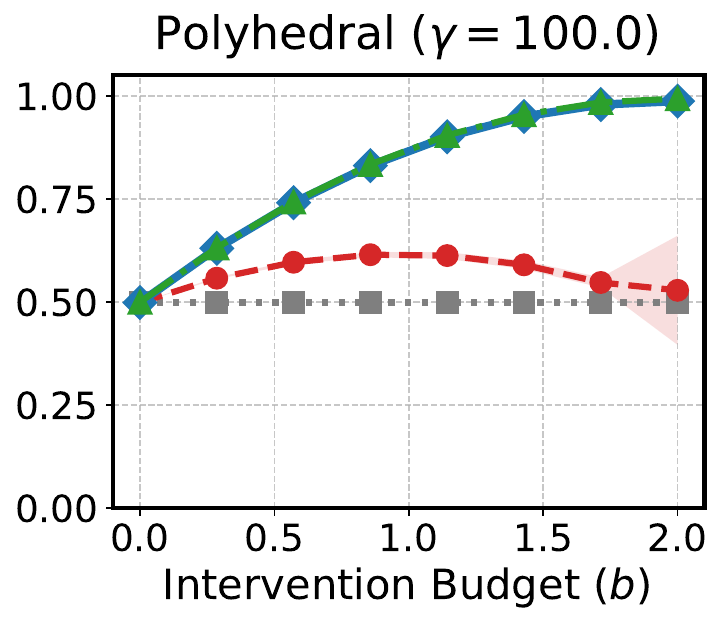}
        \caption{}
    \end{subfigure}\hfill
    \begin{subfigure}[b]{0.24\textwidth}
        \centering
        \includegraphics[width=\textwidth]{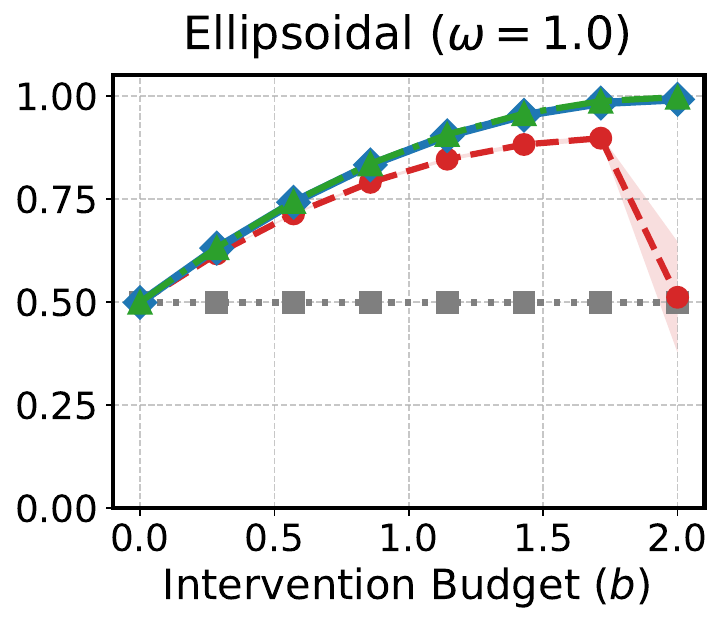}
        \caption{}
    \end{subfigure}
    
    \caption{Public opinion under varying noise and intervention budgets for the {Pokec} dataset.}
    \label{fig:rq1_extended_pokec}
\end{figure*}




\subsection{Extended RQ2 Results}
\label{app:extended_times}

In Section~\ref{sec:experiments}, we presented the average runtimes required to solve both the {\ProblemShort} and {\ProblemBudget} problems to address RQ2.
Here, in Table~\ref{tab:total_runtime_full}, we report the mean execution time (in seconds) alongside the standard deviation computed over 10 independent trials. 

\begin{table*}[ht]
    \centering
    \renewcommand{\arraystretch}{0.85} 
    \caption{Runtime (in seconds) across networks. Left section reports Step 1 of the {\ProblemShort} problem. Right section reports the two-step runtime for {\ProblemBudget} ($b=1$). Values represent the mean and one standard deviation across 10 independent trials.}
    \label{tab:total_runtime_full}
    \resizebox{\textwidth}{!}{%
    \setlength{\tabcolsep}{4pt} 
    \begin{tabular}{l r r r r r r r r r r} 
        \toprule
        \multirow{3}{*}{\textbf{Dataset}} & \multicolumn{4}{c}{\textbf{{\ProblemShort} Step 1}} & \multicolumn{6}{c}{\textbf{{\ProblemBudget} ($b=1$)}} \\
        \cmidrule(lr){2-5} \cmidrule(lr){6-11}
        & \multicolumn{2}{c}{\textbf{Polyhedral}} & \multicolumn{2}{c}{\textbf{Ellipsoidal}} & \multicolumn{3}{c}{\textbf{Polyhedral}} & \multicolumn{3}{c}{\textbf{Ellipsoidal}} \\
        \cmidrule(lr){2-3} \cmidrule(lr){4-5} \cmidrule(lr){6-8} \cmidrule(lr){9-11} 
        & Solver & \CustomLemmaThree & Solver & \CustomLemmaFour & Step 1 & Step 2 & \textbf{Total} & Step 1 & Step 2 & \textbf{Total} \\
        \midrule
        Epinions & $0.27 \pm 0.02$ & $\mathbf{0.0020 \pm 0.0002}$ & $0.46 \pm 0.03$ & $\mathbf{0.0021 \pm 0.0003}$ & $1.59 \pm 0.12$ & $4.98 \pm 0.35$ & $\mathbf{6.57 \pm 0.37}$ & $2.70 \pm 0.22$ & $3.78 \pm 0.28$ & $\mathbf{6.48 \pm 0.35}$ \\
        Gowalla  & $1.68 \pm 0.14$ & $\mathbf{0.0136 \pm 0.0012}$ & $2.92 \pm 0.21$ & $\mathbf{0.0148 \pm 0.0014}$ & $17.23 \pm 1.25$ & $33.47 \pm 2.80$ & $\mathbf{50.70 \pm 3.05}$ & $23.01 \pm 1.80$ & $19.65 \pm 1.40$ & $\mathbf{42.66 \pm 2.25}$ \\
        Google   & $3.92 \pm 0.32$ & $\mathbf{0.0332 \pm 0.0025}$ & $6.77 \pm 0.58$ & $\mathbf{0.0354 \pm 0.0028}$ & $77.88 \pm 5.80$ & $130.90 \pm 11.20$ & $\mathbf{208.78 \pm 12.60}$ & $100.57 \pm 8.50$ & $99.65 \pm 8.10$ & $\mathbf{200.22 \pm 11.75}$ \\
        Pokec    & $14.01 \pm 1.15$ & $\mathbf{0.1117 \pm 0.0085}$ & $20.01 \pm 1.75$ & $\mathbf{0.1168 \pm 0.0092}$ & $136.91 \pm 11.20$ & $278.72 \pm 24.50$ & $\mathbf{415.63 \pm 26.90}$ & $263.50 \pm 21.40$ & $229.33 \pm 18.60$ & $\mathbf{492.83 \pm 28.35}$ \\
        \bottomrule
    \end{tabular}%
    }
\end{table*}

\end{document}